\newcommand{\reals}{\mathbb{R}}
\newtheorem{theorem}{Theorem}
\newtheorem{lemma}{Lemma}
\newtheorem{definition}{Definition}
\newcommand{\Val}{\text{Val}}
\begin{document}
\pagenumbering{gobble}

\title{Buying Data Over Time: Approximately Optimal Strategies for Dynamic Data-Driven Decisions}
\author{
Nicole Immorlica\thanks{Microsoft Research, \url{nicimm@microsoft.com}} 
\and
Ian A. Kash\thanks{Department of Computer Science, University of Illinois at Chicago, \url{iankash@uic.edu}}
\and
Brendan Lucier\thanks{Microsoft Research, \url{brlucier@microsoft.com}}
}

\date{}

\maketitle

\begin{abstract}
We consider a model where an agent has a repeated decision to make and wishes to maximize their total payoff. Payoffs are influenced by an action taken by the agent, but also an unknown state of the world that evolves over time. Before choosing an action each round, the agent can purchase noisy samples about the state of the world.  The agent has a budget to spend on these samples, and has flexibility in deciding how to spread that budget across rounds. We investigate the problem of choosing a sampling 
algorithm
that optimizes total expected payoff. For example: is it better to buy samples steadily over time, or to buy samples in batches? We solve for the optimal policy, and show that it is a natural instantiation of the latter. Under a more general model that includes per-round fixed costs, we prove that a variation on this batching policy is a $2$-approximation.
\end{abstract}

\clearpage
\pagenumbering{arabic}

\section{Introduction}

The growing demand for machine learning practitioners is a testament to the way data-driven decision making is shaping our economy.  Data has proven so important and valuable because so much about the current state of the world is \emph{a priori} unknown.  We can better understand the world by investing in data collection, but this investment can be costly; deciding how much data to acquire can be a non-trivial undertaking, especially in the face of budget constraints.
Furthermore, the value of data is typically not linear.  Machine learning algorithms often see diminishing returns to performance as their training dataset grows~\cite{Kalayeh1983,cortes1994learning}.
This non-linearity is further complicated by the fact that a data-driven decision approach is typically intended to replace some existing method, so its value is relative to the prior method's performance.

As a motivating example for these issues, consider a politician who wishes to accurately represent the opinion of her constituents.  These constituents have a position on a policy, say the allocation of funding to public parks.  The politician must choose her own position on the policy 
or abstain from the discussion.
If she states a position, she experiences a disutility that is increasing in the distance of her position from that of her constituents.  If she abstains, she incurs a fixed cost for failing to take a stance.  To help her make an optimal decision she can hire a polling firm that collects data on the participants' positions.

We focus on the dynamic element of this story.  In many decision problems, the state of the world evolves over time.  In the example above, the opinions of the constituents might change as time passes, impacting the optimal position of the politician.  As a result, data about the state of the world becomes stale.  Furthermore, many decisions are not made a single time; instead, decisions are made repeatedly.  In our example, the politician can update funding levels each fiscal quarter. 

When faced with budget constraints on data collection and the issue of data staleness, decisions need to be made about when to collect data and when to save budget for the future, and whether to make decisions based on stale data or apply a default, non-data-driven policy.  Our main contribution is a framework that models the impact of such budget constraints on data collection strategies.  In our example, the politician has a budget for data collection.  A polling firm charges a fixed cost to initiate a poll (e.g., create the survey) plus a fee per surveyed participant.  The politician may not have enough budget to hire the firm to survey every constituent every quarter.  Should she then survey fewer constituents every quarter?  Or survey a larger number of constituents every other quarter, counting on the fact that opinions do not drift too rapidly?

We initiate the study with arguably the simplest model that exhibits this tension.  The state of the world (constituents' opinions) is hidden but drawn from a known prior distribution, then evolves stochastically.  Each round, the decision-maker (politician) can collect one or more noisy samples that are correlated with the hidden state at a cost affine in the number of samples (conduct a poll). Then she chooses an action and incurs a loss.  Should the decision-maker not exhaust her budget in a given round, she can bank it for future rounds.  A sampling algorithm describes an online policy for scheduling the collection of samples given the budget and past observations.

We instantiate this general framework by assuming Gaussian prior, perturbations and sample noise.\footnote{A Gaussian prior is justified in our running example if we assume a large population limit of constituents' opinions.  That the prior estimate of drift is also Gaussian is likewise motivated as the number of periods grows large.  We discuss alternative distributional assumptions on the prior, perturbations and noise in Section~\ref{sec:extension}.}  We capture the decisions that need to be made as the problem of estimating the current state value, using the classic squared loss to capture the cost of making a decision using imprecise information.  Alternatively, there is always the option to not make a decision based on the data and instead accept a default constant loss.  We assume a budget on the number of samples collected per unit time, and importantly this budget can be banked for future rounds if desired.

\subsection{A Simple Example.} To illustrate our technical model, 
suppose the hidden state (constituents' average opinion) is initially drawn from a mean-zero Gaussian of variance $1$.  In each round, the state is subject to mean-zero Gaussian noise of variance $1$ (the constituents update their opinions), which is added to the previous round's state.  Also, any samples we choose to take are also subject to mean-zero Gaussian noise of variance $1$ (polls are imperfect).  Our budget for samples is $1$ per period, and one can either guess at the hidden state (incurring a penalty equal to the squared loss) or pass and take a default loss of $3/4$.  What is the expected average loss of the policy that takes a single sample each round, and then takes the optimal action?  As it turns out, the expected loss is precisely $\phi - 1 \approx 0.618$, where $\phi$ is the golden ratio $\frac{1+\sqrt{5}}{2}$ (see Section~\ref{sec:example.simple} for the analysis).  
However, this is not optimal: 
saving up the allotted budget and taking two samples every other round leads to an expected loss of $\frac{0.75 + \sqrt{2}-1}{2} \approx 0.582$.  The intuition behind the improvement is that taking a single sample every round beats the outside option, but not by much; it is better to beat the outside option significantly on even-numbered rounds (by taking 2 samples), then simply use the outside option on odd-numbered rounds.
It turns out that one cannot improve on this by saving up for 3 or more rounds to take even more samples all at once.  However, one can do better by alternating between taking no samples for two periods and then two samples each for two periods, which results in a long-run average loss of $\approx 0.576$.

\subsection{Our Results.}
As we can see from the example above, the space of policies to consider
is
quite large.
One simple observation is that since samples become stale over time it is never optimal to collect samples and then take the outside option (i.e., default fixed-cost action) in the same round;
it would be better to defer 
data collection to later rounds where decisions will be made based on data.
As a result, a natural class of policies to consider is those which alternate between collecting samples and saving budget.  
Such ``on-off'' policies can be thought of as engaging in ``data drives'' while neglecting data collection the rest of the time.

Our main result is that these on-off policies are asymptotically optimal, with respect to all dynamic policies.  Moreover, it suffices to collect samples at a constant rate during the sampling part of the policy's period.  Our argument is constructive, and we show how to compute an asymptotically optimal policy.  This policy divides time into exponentially-growing chunks and collects data in the latter end of each chunk.  

The solution above assumes that costs are linear in the number of samples collected.  We next consider a more general model with a fixed up-front cost for the first sample collected in each round.  This captures the costs associated with setting up the infrastructure to collects samples on a given round, such as 
hiring a polling firm which uses a two-part tariff.
Under such per-round costs, it can be suboptimal to sample in sequential periods (as in an on-off policy), as this requires paying the fixed cost twice. For this generalized cost model, we consider simple and approximately optimal policies.  When evaluating performance, we compare against a null ``baseline'' policy that eschews data collection and simply takes the outside option every period.  We define the value of a policy to be its improvement over this baseline, so that the null policy has a value of $0$ and every policy has non-negative value.  While this is equivalent to simply comparing the expected costs of policies this alternative measure is intended to capture how well a policy leverages the extra value obtainable from data; we feel that this more accurately reflects the relative performance of different policies.  

We focus on a class of {\em lazy policies} that collect samples only at times when the variance of the current estimate is worse than the outside option. This class captures a heuristic based on a threshold rule: the decision-maker chooses to collect data when they do not have enough information to gain over the outside option. We show the optimal lazy policy is a $1/2$-approximation to the optimal policy.  The result is constructive, and we show how to compute an asymptotically optimal lazy policy.  Moreover, this approximation factor is tight for lazy policies.

To derive these results, we begin with the well-known fact that the expected loss under the squared loss cost function is the variance of the posterior.  We use an analysis based on Kalman filters~\cite{kalman1960new}, which are used to solve localization problems in domains such as astronautics~\cite{lefferts1982kalman}, robotics~\cite{thrun2000probabilistic}, and traffic monitoring~\cite{work2008ensemble}, to characterize the evolution of variance given a sampling policy. We show how to maximize value using geometric arguments and local manipulations to transform an optimal policy into either an on-off policy or a lazy policy, respectively.

We conclude with two extensions.  We described our results for a discrete-time model, but one might instead consider a continuous-time variant in which samples, actions, and state evolution occur continuously.  We show how to extend all of our results to such a continuous setting.  Second, we describe a non-Gaussian instance of our framework, where the state of the world is binary and switches with some small probability each round.  We solve for the optimal policy, and show that (like the Gaussian model) it is characterized by non-uniform, bursty sampling.

\subsection{Other Motivating Examples.}  We motivated our framework with a toy example of a politician polling his or her constituents.  But we note that the model is general and applies to other scenarios as well.  For example, suppose a phone uses its GPS to collect samples, each of which provides a noisy estimate of location (reasonably approximated by Gaussian noise).  The ``cost'' of collecting samples is energy consumption, and the budget constraint is that the GPS can only reasonably use a limited portion of the phone's battery capacity.  The worse the location estimate is, the less useful this information is to apps; sufficiently poor estimates might even have negative value.  However, as an alternative, apps always have the outside option of providing location-unaware functionality.  Our analysis shows that it is approximately optimal to extrapolate from existing data to estimate the user’s location most of the time, and only use the GPS in “bursts” once the noise of the estimate exceeds a certain threshold.  Note that in this scenario the app never observes the ``ground truth'' of the phone's location.  Similarly, our model might capture the problem faced by a firm that runs user studies when deciding which features to include in a product, given that such user studies are expensive to run and preferences may shift within the population of customers over time.

\subsection{Future Directions.}
Our results provide insight into the trade-offs involved in designing data collection policies in dynamic settings.  We construct policies that navigate the trade-off between cost of data collection and freshness of data, and show how to optimize data collection schedules in a setting with Gaussian noise.  But perhaps our biggest contribution is conceptual, in providing a framework in which these questions can be formalized and studied.  We view this work as a first step toward a broader study of the dynamic value of data. An important direction for future work is to consider other models of state evolution and/or sampling within our framework, aimed at capturing other applications. For example, if the state evolves in a heavy-tailed manner, as in the non-Gaussian instance explored in Section~\ref{sec:extension}, then we show it is beneficial to take samples regularly in order to detect large, infrequent jumps in state value, and then adaptively take many samples when such a jump is evident.  We solve this extension only for a simple two-state Markov chain.  Can we quantify the dynamic value of data and find an (approximately) optimal and simple data collection policy in a general Markov chain?

\subsection{Related work}

While we are not aware of other work addressing the value of data in a dynamic setting, there has been considerable attention paid to the value of data in static settings.  Arietta-Ibarra et al.~\cite{arrieta2018should} argue that the data produced by internet users is so valuable that they should be compensated for their labor.  Similarly, there is growing appreciation for the value of the data produced on crowdsourcing platforms like Amazon Mechanical Turk~\cite{buhrmester2011amazon,ipeirotis2010analyzing}.  Other work has emphasized that not all crowdsourced data is created equal and studied the way tasks and incentives can be designed to improve the quality of information gathered~\cite{fothergill2012instructing,shah2015double}.  Similarly, data can have non-linear value if individual pieces are substitutes or complements~\cite{chen2016informational}.  Prediction markets can be used to gather information over time, with participants controlling the order in which information is revealed~\cite{cowgill2009using}.

There is a growing line of work attempting to determine the marginal value of training data for deep learning methods.  Examples include training data for classifying medical images~\cite{Cho2015} and chemical processes~\cite{Beleites2013SampleSP}, as well as for more general problems such as estimating a Gaussian distribution~\cite{Kalayeh1983}.  These studies consider the static problem of learning from samples, and generally find that additional training data exhibits decreasing marginal value.  Koh and Liang~\cite{koh2017understanding} introduced the use of influence functions to quantify how the performance of a model depends on individual training examples.

While we assume samples are of uniform quality, other work has studied agents who have data of different quality or cost~\cite{liang2017dynamic,chen2018optimal,fang2007putting}.
Another line studies the way that data is sold in current marketplaces~\cite{stahl2014data}, as well as proposing new market designs~\cite{li2012pricing}.  This includes going beyond markets for raw data to markets which acquire and combine the outputs of machine learning models~\cite{storkey2011machine}.

Our work is also related to statistical and algorithmic aspects of learning a distribution from samples.  A significant body of recent work has considered problems of learning Gaussians using a minimal number of noisy and/or adversarial samples~\cite{KMV10,DKKLMS16,DKS17,LRV16,Diakonikolas2018}.
In comparison, we are likewise interested in learning a hidden Gaussian from which we obtain noisy samples (as a step toward determining an optimal action), but instead of robustness to adversarial noise we are instead concerned about optimizing the split of samples across
time periods in a purely stochastic setting.


Our investigation of data staleness is closely related to the issue of concept drift in streaming algorithms; see, e.g., Chapter 3 of~\cite{Aggarwal2006}
Concept drift refers to scenarios where the data being fed to an algorithm is pulled from a model that evolves over time, so that, for example, a solution built using historical data will eventually lose accuracy.  Such scenarios arise in problems of histogram maintenance~\cite{Gilbert2002}, dynamic clustering~\cite{Aggarwal2003Clustering}, and others.  One problem is to quantify the amount of drift occurring in a given data stream~\cite{Aggarwal2003}.  Given that such drift is present, one approach to handling concept drift is via sliding-window methods, which limit dependence on old data~\cite{Datar2002}.  The choice of window size captures a tension between using a lot of stale data or a smaller amount of fresh data.  However,
in work on concept drift one typically cannot control the rate at which data is collected.

Another concept related to staleness is the ``age of information.''  This captures scenarios where a source generates frequent updates and a receiver wishes to keep track of the current state, but due to congestion 
in the transmission 
technology (such as a queue or database locks) it is optimal to limit the rate at which updates are sent~\cite{kaul2012real,song1990performance}.  
Minimizing the age of information can be captured as a limit of our model where a single sample suffices to provide perfect information.  Recent work has examined variants of the model where generating updates is costly~\cite{hao2020regulating}, but the focus in this literature is more on the management of the congestible resource.  Closer to our work, several recent papers have eliminated the congestible resource and studied issues such as an energy budget that is stochastic and has limited storage capacity~\cite{wu2017optimal} and pricing schemes for when sampling costs are non-uniform~\cite{wang2019dynamic,zhang2019price}.  Relative to our work these papers have simpler models of the value of data and focus on features of the sampling policy given the energy technology and pricing scheme, respectively.

\section{Model}
\label{sec:model.discrete}


We first describe our general framework, then describe a specific instantiation of interest in Section~\ref{sec:model.gaussian}.
Time occurs in rounds, indexed by $t = 1, 2, \dotsc$.  There is a hidden state variable $x_t \in \Omega$ that evolves over time according to a stochastic process.  The initial state $x_1$ is drawn from known distribution $F_1$.  Write $m_t$ for the (possibly randomized) evolution mapping applied at round $t$, so that $x_{t+1} \leftarrow m_t(x_t)$.  

In every round, the decision-maker chooses an action $y_t \in A$, and then suffers a loss $\ell(y_t,x_t)$ that depends on both the action and the hidden state.  The evolution functions $(m_t)$ and loss function $\ell$ are known to the decision-maker, but neither the state $x_t$ nor the loss $\ell(y_t,x_t)$ is directly observed.\footnote{Assuming that the ground truth for $\ell(y_t,x_t)$ is unobserved captures scenarios like our political example, and approximates settings where the decision maker only gets weak feedback, feedback at a delay, or feedback in aggregate over a long period of time.  Observing the loss provides additional information about $x_{t+1}$, and this could be considered a variant of our model where the decision-maker gets some number of samples ``for free'' each round from observing a noisy version of the loss.}  Rather, on each round before choosing an action, the decision-maker can request one or more independent samples that are correlated with $x_t$, drawn from a known distribution $\Gamma(x_t)$.  

Samples are costly, and the decision-maker has a budget that can be used to obtain samples.  The budget is $B$ per round, and can be banked across rounds.  A sampling policy results in a number of samples $s_t$ taken in each round $t$, which can depend on all previous observations.  The cost of taking $s_t$ samples in round $t$ is $C(s_t) \geq 0$.  We assume that $C$ is non-decreasing and $C(0) = 0$.  A sampling policy is \emph{valid} if $\sum_{t = 1}^T C(s_t) \leq B\cdot T$ for all $T$.  For example, $C(s_t) = s_t$ corresponds to a cost of $1$ per sample, and setting $C(s_t) = s_t + z \cdot \mathbbm{1}_{s_t > 0}$ adds an additional cost of $z$ for each round in which at least one sample is collected. 

To summarize: on each round, the decision-maker chooses a number of samples $s_t$ to observe, then chooses an action $y_t$.  Their loss $\ell(y_t,x_t)$ is then realized, the value of $x_t$ is updated to $x_{t+1}$, and the process proceeds with the next round.  
The goal is to minimize the expected long-run average of $\ell(y_t,x_t)$, in the limit as $t \to \infty$,
subject to $\sum_{t=0}^{T} C(s_t) \leq B\cdot T$ for all $T \geq 1$.  

\subsection{Estimation under Gaussian Drift}\label{sec:model.gaussian}


We will be primarily interested in the following instantiation of our general framework.  The hidden state variable is a real number (i.e., $\Omega = \reals$) and the decision-maker's goal is to estimate the hidden state in each round.  The initial state is $x_1 \sim N(0,\rho)$, a Gaussian with mean $0$ and variance $\rho > 0$.  Moreover, the evolution process $m_t$ sets $x_{t+1} = x_t + \delta_t$, where each $\delta_t\sim N(0,\rho)$ independently.  We recall that the decision-maker knows the evolution process (and hence $\rho$) but does not directly observe the realizations $\delta_t$.

Each sample in round $t$ is drawn from $N(x_t, \sigma)$ where $\sigma > 0$.  
Some of our results will also allow fractional sampling, where we think of an $\alpha \in (0,1)$ fraction of a sample as a sample drawn from $N(x_t, \sigma/\alpha)$.\footnote{One can view fractional sampling as modeling scenarios where the value of any one single sample is quite small; i.e., has high variance, so that a single ``unit'' of variance is derived from taking many samples.  E.g., sampling a single constituent in our polling example.  It also captures settings where it is possible to obtain samples of varying quality with different levels of investment.}
%
The action space is 
$A = \reals \cup \{\perp\}$.  If the decision-maker chooses $y_t \in \reals$, her loss is the squared error of her estimate $(y_t-x_t)^2$.  If she is too unsure of the state, she may instead take a default action $y_t = \perp$, which corresponds to not making a guess; this results in a constant loss of $c > 0$.  
%
Let $G_t$ be a random variable whose law is the decision maker's posterior after observing whatever samples are taken in round $t$ as well as all previous samples.  The decision maker's subjective expected loss when guessing $y_t \in \reals$ is $E[(y_t - G_t)^2]$.  This is well known to be minimized by taking $y_t = E[G_t]$, and that furthermore the expected loss is $E[(E[G_t] - G_t)^2] = Var(G_t)$.  It is therefore optimal to guess $y_t = E[G_t]$ if and only if Var$(G_t) < c$, otherwise pass.  

We focus on deriving approximately optimal sampling algorithms.  To do so, we need to track the variance of $G_t$ as a function of the sampling strategy.  
As the sample noise and random state permutations are all zero-mean Gaussians,
$G_t$ is a zero-mean Gaussian as well, and the evolution of its variance has a simple form.

\begin{lemma}\label{lem:kalman}
Let $v_t$ be the variance of $G_t$ and suppose each $\delta_t \sim N(0,\rho)$ independently, and that each sample is subject to zero-mean Gaussian noise with variance $\sigma$. Then, if the decision-maker takes $s$ samples in round $t+1$, the variance of $G_{t+1}$ is 
\[v_{t+1}=\frac{v_t+\rho}{1 + \frac{s}{\sigma}(v_t+\rho)}.\]
\end{lemma}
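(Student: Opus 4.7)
The plan is to derive the update by combining two standard Gaussian manipulations: propagating the posterior through the drift, and then performing a Bayesian update with the newly observed samples.

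First, I would start from the inductive hypothesis that at the end of round $t$ the posterior $G_t$ is Gaussian with variance $v_t$ (the paper has already argued that all posteriors are Gaussian, since the prior, drift, and sample noise are all zero-mean Gaussians). Because $x_{t+1}=x_t+\delta_t$ with $\delta_t\sim N(0,\rho)$ independent of everything the decision-maker has observed so far, the distribution of $x_{t+1}$ given the round-$t$ information is obtained by convolving $G_t$ with $N(0,\rho)$. Convolution of independent Gaussians adds variances, so the ``prior'' at the beginning of round $t+1$, before any new samples, is Gaussian with variance $v_t+\rho$.

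Next I would incorporate the $s$ samples drawn in round $t+1$. Each sample is $N(x_{t+1},\sigma)$ conditional on $x_{t+1}$, and conditional on $x_{t+1}$ the samples are independent. Their empirical mean is a sufficient statistic and is distributed as $N(x_{t+1},\sigma/s)$. At this point I would invoke the standard Gaussian conjugate update: when the prior has variance $v_t+\rho$ and an observation has conditional variance $\sigma/s$, the posterior precision equals the sum of the two precisions, namely
\[
\frac{1}{v_{t+1}}=\frac{1}{v_t+\rho}+\frac{s}{\sigma}.
\]
Inverting and clearing denominators yields exactly
\[
v_{t+1}=\frac{v_t+\rho}{1+\tfrac{s}{\sigma}(v_t+\rho)},
\]
which matches the claim. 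The $s=0$ case reduces to $v_{t+1}=v_t+\rho$, consistent with the drift-only step.

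There is no real obstacle here; the only thing to be careful about is verifying that the sample mean is a sufficient statistic (so that using $s$ i.i.d.\ samples is equivalent to a single observation with variance $\sigma/s$) and that the independence assumptions justify simply adding precisions. If one prefers to avoid appealing to sufficiency, one can instead apply the precision-adding update $s$ times in sequence: starting from precision $1/(v_t+\rho)$, each sample adds $1/\sigma$ to the precision, giving total posterior precision $1/(v_t+\rho)+s/\sigma$, which produces the same formula. Either route gives a short, self-contained derivation.
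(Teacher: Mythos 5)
Your derivation is correct and matches the paper's argument in substance: the paper also splits the round into a drift step (variance becomes $v_t+\rho$, since independent Gaussians add variances) followed by the observation update, which it takes from the standard Kalman-filter update rule for one sample and then extends to $s$ samples by induction — exactly your fallback of adding $\tfrac{1}{\sigma}$ to the precision once per sample. The only cosmetic difference is that you derive the update from the Gaussian conjugate/precision-addition identity (or sufficiency of the sample mean) rather than citing the Kalman filter as a black box, which makes the proof a bit more self-contained but is not a different approach.
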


The proof, which is deferred to the appendix along with all other proofs,
follows from our model being a special case of the model underlying a Kalman filter.

The optimization problem therefore reduces to choosing a number of samples $s_t$ to take in each round $t$ in order to minimize the long-run average of $\min(v_t,c)$, the loss of the optimal action.  That is, the goal is to minimize 
$\lim\sup_{T \to \infty} \frac{1}{T} \sum_{t = 1}^{T} \min(v_t,c),$
where we take the superior limit so that the quantity is defined even when the average is not convergent.  We choose $C(s_t) = s_t + z \cdot \mathbbm{1}_{s_t > 0}$, so this optimization is subject to the budget constraint that, at each time $T \geq 1$, $\sum_{t=1}^Ts_t + z \cdot \mathbbm{1}_{s_t > 0} \leq BT$.  This captures two kinds of information acquisition costs faced by the decision-maker.  First she faces a cost per sample, which we have normalized to one.  Second, she faces a fixed cost $z$ (which may be 0) on each day she chooses to take samples, expressed in terms of the number of samples that could instead have been taken on some other day had this cost not been paid.  This captures the costs associated with setting up the infrastructure to collects samples on a given round, such as getting data collectors to the location where they are needed, hiring a polling firm which uses a two-part tariff, or establishing a satellite connection to begin using a phone's GPS.

A useful baseline performance is the cost of a policy that
 takes no samples and simply chooses the outside option at all times.  We refer to this as the \emph{null policy}.  The \emph{value} of a sampling policy $s$, denoted $\Val(s)$, is defined to be the difference between its cost and the cost of the null policy:
$\lim\inf_{T \to \infty}\frac{1}{T} \sum_{t=1}^T \max(c-v_t, 0).$
Note that maximizing value is
equivalent to minimizing cost, which we illustrate in Section~\ref{sec:viz}.
We say that a policy is $\alpha$-approximate if its value is at least an $\alpha$ fraction of the optimal policy's value.  

\section{Analyzing Variance Evolution}
\label{sec:piecewise}

Before moving on to our main results, we show how to analyze the evolution of the variance resulting from a given sampling policy.  We first illustrate our model with a particularly simple class of policies: those where $s_t$ takes on only two possible values.  We then analyze arbitrary periodic policies, and show via contraction that they result in convergence to a periodic variance evolution.

\subsection{Visualizing the Decision Problem}
\label{sec:viz}

To visualize the problem, we begin by plotting the result of an example policy where the spending rate is constant for some interval of rounds, then shifts to a different constant spending rate.
Figure~\ref{fig:pc-policy} illustrates one such policy.  The spending rates are indicated as alternating line segments, while the variance is an oscillating curve, always converging toward the current spending rate.  Note that this particular policy is periodic, in the sense that the final variance is the same as the initial variance.  The horizontal line gives one possible value for the cost of the outside option.  Given this, the optimal policy is to guess whenever the orange curve is below the green line and take the outside option whenever it is above it.  Thus, the loss associated with this spending policy is given by the orange shaded area in 
Figure~\ref{fig:pc-policy}.
Minimizing this loss is equivalent to maximizing the green shaded area, which corresponds to the value of the spending policy.  The null policy, which takes no samples and has variance greater than $c$ always (possibly after an initial period if $v_0 < c$), has value $0$.

\begin{figure*}[t]
\centering
\includegraphics[width=0.4\textwidth]{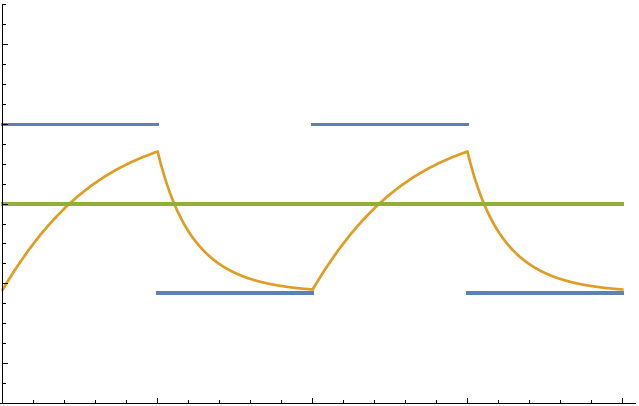}
\includegraphics[width=0.4\textwidth]{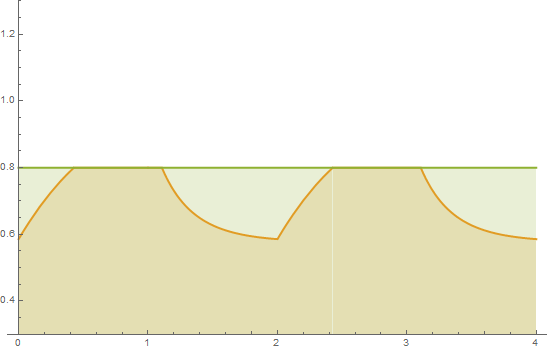}
\caption{The variance for a piecewise-constant sampling policy, and its loss and benefit}
\label{fig:pc-policy}
\end{figure*}


\subsection{Periodic Policies}
\label{sec:model.periodic}

We next consider policies that are periodic.  A \emph{periodic policy} with period $R$ has the property that $s_t = s_{t+R}$
for all $t \geq 1$.  
Such policies are natural and have useful structure.
In a periodic policy, the variance $(v_t)$ converges uniformly to being periodic in the limit as $t \to \infty$.  This follows because the impact of sampling on variance is a contraction map.

\begin{definition}
Given a normed space $X$ with norm $||\cdot||$, a mapping $\Psi \colon X \to X$ is a \emph{contraction map} if there exists a $k < 1$ such that, for all $x,y \in X$, $||\Psi(x) - \Psi(y)|| \leq k ||x - y||$.
\end{definition}

\begin{lemma}
\label{lem.contraction}
Fix a sampling policy $s$, and a time $R \geq 1$, 
and suppose that $s$ takes a strictly positive number of samples in each round $t \leq R$.
Let $\Psi$ be the mapping defined as follows: supposing that $v_0 = x$ and $v$ is the variance function resulting from sampling policy $s$, set $\Psi(x) := v_R$.  Then $\Psi$ is a contraction map over the non-negative reals, under the absolute value norm.
\end{lemma}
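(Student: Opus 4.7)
The plan is to exhibit $\Psi$ as a composition of single-round updates and show that each factor is itself a contraction with constant strictly less than one; the composition then inherits the contraction property.

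By Lemma~\ref{lem:kalman}, one round of sampling transforms the variance by the map
\[ f_s(v) := \frac{v+\rho}{1+\frac{s}{\sigma}(v+\rho)}, \]
when $s$ samples are taken. Since each $f_s$ maps $[0,\infty)$ into itself (both numerator and denominator are positive for $v\geq 0$), we can write
\[ \Psi = f_{s_R} \circ f_{s_{R-1}} \circ \cdots \circ f_{s_1}, \]
where $s_t > 0$ is the number of samples taken in round $t$ by the policy $s$.

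Next, I would analyze $f_s$ on $[0,\infty)$ for fixed $s>0$. A direct differentiation gives
\[ f_s'(v) = \frac{1}{\bigl(1+\frac{s}{\sigma}(v+\rho)\bigr)^2}, \]
which is positive and strictly decreasing in $v$. Hence its maximum on $[0,\infty)$ is attained at $v=0$, giving $f_s'(0) = 1/(1+s\rho/\sigma)^2$. Because $s>0$ and $\rho>0$, we have $k_s := 1/(1+s\rho/\sigma)^2 < 1$. The mean value theorem then yields $|f_s(x)-f_s(y)| \leq k_s |x-y|$ for all $x,y\geq 0$, so $f_s$ is a contraction on $[0,\infty)$ with constant $k_s < 1$.

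Finally, I would invoke the standard fact that a composition of contractions is a contraction whose constant is the product of the individual constants: if $g$ and $h$ are contractions with constants $k_g, k_h$, then $|h(g(x))-h(g(y))| \leq k_h|g(x)-g(y)| \leq k_h k_g |x-y|$. Applying this inductively to the decomposition above yields that $\Psi$ is a contraction on $[0,\infty)$ with constant $k := \prod_{t=1}^R k_{s_t}$, which is strictly less than $1$ since every factor is. This is essentially a routine computation, so I do not anticipate a real obstacle; the one thing to be careful about is using the hypothesis $s_t>0$ \emph{strictly} at every step (needed so that each $k_{s_t}<1$ rather than merely $\leq 1$), and checking that $[0,\infty)$ is preserved so the composition is well-defined.
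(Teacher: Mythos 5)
Your proof is correct and takes essentially the same route as the paper's: both reduce $\Psi$ to the composition of single-round updates from Lemma~\ref{lem:kalman} and show that each round with $s_t>0$ shrinks differences by a factor strictly below $1$ (the paper via the exact identity $v^x_1-v^y_1=\frac{x-y}{(1+(s_1/\sigma)(x+\rho))(1+(s_1/\sigma)(y+\rho))}$, you via the derivative bound $f_s'(v)\le 1/(1+s\rho/\sigma)^2$ and the mean value theorem), then compose over rounds. The only notable difference is that the paper's appendix proof works under the weaker hypothesis that samples are taken in at least one round of the interval (and also treats the continuous model), while your argument uses $s_t>0$ in every round, which is exactly what the stated hypothesis provides.
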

The proof appears in 
Appendix~\ref{app:opt}.
It is well known that a contraction map has a unique fixed point, and repeated application will converge to that fixed point.  Since we can view the impact of the periodic sampling policy as repeated application of mapping $\Psi$ to the initial variance in order to obtain $v_0, v_R, v_{2R}, \dotsc$, we conclude that the variance will converge uniformly to a periodic function for which $v_t = v_{t+R}$.  Thus, for the purpose of evaluating long-run average cost, it will be convenient (and equivalent) to replace the initial condition on $v_0$ with a periodic boundary condition $v_0 = v_R$, and then choose $s$ to minimize the average cost over a single period,
$\frac{1}{R} \int_0^R \min\{v_t, c\}dt,$
subject to the budget constraint that, at any round $T \leq R$, we have
$\sum_{t=1}^{T}s_t \leq BT$.

\subsection{Lazy Policies}

Write $\tilde{v} = v_{t-1} + \rho$ for the variance that would be obtained in round $t$ if $s_t = 0$.
We say that a policy is \emph{lazy} if $s_t = 0$ whenever $\tilde{v}_t < c$.
%
That is, samples are collected only at times where the variance would otherwise be at or above the outside option value $c$.  
Intuitively, we can think of such a policy as collecting a batch of samples in one round, then ``free-riding'' off of the resulting information in subsequent rounds.  The free-riding occurs until the posterior variance grows large enough that it becomes better to select the outside option, at which point the policy may collect another batch of samples.

If a policy is lazy, then its variance function $v$ 
increases by $\rho$ whenever $\tilde{v}_t < c$, with downward steps only at times corresponding to when samples are taken.
Furthermore, the value of such a policy decomposes among these sampling instances: 
for any $t$ where $s_t > 0$, 
resulting in a variance of $v_t < c$, if we write $h = \lfloor c - v_t \rfloor$ then we can attribute a value of 
$\frac{1}{2} h (h+1) + (h+1)(c - v_t - h)$.  
Geometrically, this is the area of the ``discrete-step triangle'' formed between the increasing sequence of variances $v_t$ and the constant line at $c$, over the time steps $t, \dotsc, t+h+1$.

%
%

\subsection{On-Off Policies}

An On-Off policy is a periodic policy parameterized by a time interval $T$ and a sampling rate $S$.  Roughly speaking, the policy alternates between intervals where it samples at a rate of $S$ each round, and intervals where it does not sample.  The two interval lengths sum to $T$, and the length of the sampling interval is set as large as possible subject to budget feasibility.  More formally,
the policy sets $s_t =0$ for all $t \leq (1-\alpha) \cdot T$, where $\alpha = \min\{B / S, 1\} \in [0,1]$ and $s_t = S$ for all $t$ such that $(1-\alpha)T < t \leq T$.  
This policy is then repeated, on a cycle of length $T$.  The fraction $\alpha$ is chosen to be as large as possible, subject to the budget constraint.

\subsection{Simple Example Revisited}
\label{sec:example.simple}
We can now justify the simple example we presented in the introduction, where $\rho = \sigma = 1$, $B = 1$, and $c = 0.75$.  The policy that takes a single sample each round is periodic with period $1$, and hence will converge to a variance that is likewise equal each round.  This fixed point variance, $v^*$, satisfies $v^* = \frac{v^*+1}{1+(v^*+1)}$ by Lemma~\ref{lem:kalman}.  Solving for $v^*$ yields $v^* = \frac{\sqrt{5}-1}{2} < 0.75$, which is the average cost per round.

If instead the policy takes $k$ samples every $k$ rounds, this results in a variance that is periodic of period $k$.  After the round in which samples are taken, the fixed-point variance satisfies $v^* = \frac{v^*+k}{1+k(v^*+k)}$, again by Lemma~\ref{lem:kalman}.  Solving for $v^*$, and noting that $v^*+1 \geq 1 > c$, yields that the cost incurred by this policy is minimized when $k = 2$.

To solve for the policy that alternates between taking no samples for two round, followed by taking two samples on each of two rounds, suppose the long-run, periodic variances are $v_1, v_2, v_3, v_4$, where samples are taken on rounds $3$ and $4$.  Then we have $v_2 = v_1+1$, $v_3 = \frac{v_2+1}{1+2(v_2+1)}$, $v_4 = \frac{v_3+1}{1+2(v_3+1)}$, and $v_1 = v_4 + 1$.  Combining this sequence of equations yields $4v_1^2 + 4v_1 - 13 = 0$, which we can solve to find $v_1 = \frac{-1 + \sqrt{14}}{2} \approx 1.3708$.  Plugging this into the equations for $v_2, v_3, v_4$ and taking the average of $\min\{v_i, 0.75\}$ over $i \in \{1,2,3,4\}$ yields the reported average cost of $\approx 0.576$.

\section{Solving for the Optimal Policy}
\label{sec:optimal}

In this section we show that when the cost of sampling is linear in the total number of samples taken (i.e., $z = 0$)\footnote{Recall that $z$ is the fixed per-round cost of taking a positive number of samples.  Even when $z = 0$, there is still a positive per-sample cost.}, and when fractional sampling is allowed, then the supremum value over all on-off policies is an upper bound on the value of any policy.  This supremum is achieved in the limit as the time interval $T$ grows large.  So, while no individual policy achieves the supremum, one can get arbitrarily close with an on-off policy of sufficiently long period.  Proofs appear in 
Appendix~\ref{app:opt}.

We begin with some definitions.  For a given period length $T > 0$, write $s^T$ for the on-off policy of period $T$ with optimal long-run average value.  Recall $\Val(s^T)$ is the value of policy $s_T$.  We first argue that larger time horizons lead to better on-off policies.

\begin{lemma}\label{lem:onoff.monotone}
With fractional samples, for all $T > T'$, we have $\Val(s^T) > \Val(s^{T'})$.
\end{lemma}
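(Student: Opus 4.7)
The plan is to construct, for any $T > T'$, a valid on-off policy $\tilde s$ of period $T$ satisfying $\Val(\tilde s) > \Val(s^{T'})$; this implies $\Val(s^T) \geq \Val(\tilde s) > \Val(s^{T'})$ by optimality of $s^T$. Let $(S^\ast, \alpha^\ast)$ with $\alpha^\ast = \min(B/S^\ast, 1)$ parametrize $s^{T'}$. Define $\tilde s$ as the on-off policy of period $T$ using the \emph{same} rate $S^\ast$ and fraction $\alpha^\ast$; budget feasibility $\alpha^\ast S^\ast T = BT$ still holds, and both the on-time ($\alpha^\ast T$) and off-time ($(1-\alpha^\ast)T$) are longer by the factor $T/T'$.

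Both variance evolutions converge to unique periodic steady states by Lemma~\ref{lem.contraction}. Denote the steady-state variance at the end of on-time by $v_0$ under $s^{T'}$ and $\tilde v_0$ under $\tilde s$; these are determined by the fixed-point equations $v_0 = f_{S^\ast}^{\alpha^\ast T'}(v_0 + (1-\alpha^\ast) T' \rho)$ and $\tilde v_0 = f_{S^\ast}^{\alpha^\ast T}(\tilde v_0 + (1-\alpha^\ast) T \rho)$, where $f_{S^\ast}(v) := (v + \rho)/(1 + (S^\ast/\sigma)(v + \rho))$ is the sampling map of Lemma~\ref{lem:kalman}. Using the monotonicity of $f_{S^\ast}$ and its strict contraction toward the unique fixed point $v^\ast = v^\ast(S^\ast)$, I will show $\tilde v_0 < v_0$: the additional iterations of $f_{S^\ast}$ during $\tilde s$'s longer on-time drive the variance strictly closer to $v^\ast$ than the higher peak reached during $\tilde s$'s longer off-time can offset.

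Given $\tilde v_0 < v_0$, the value comparison becomes geometric. The value per period decomposes into an on-time sum of $\max(c - v_t, 0)$ over a monotonically decreasing sequence with minimum $v_0$ or $\tilde v_0$, plus an off-time sum over a linearly increasing sequence starting from $v_0$ or $\tilde v_0$. The extra $\alpha^\ast(T - T')$ on-time rounds in $\tilde s$ have variance close to $v^\ast$ and each contribute nearly $c - v^\ast > 0$; the original on-time rounds of $\tilde s$ contribute no less than the corresponding rounds of $s^{T'}$ since $\tilde s$'s trajectory lies below $s^{T'}$'s on the matching tail; and the off-time contribution changes by at most a bounded additive constant, since any round with $v_t > c$ contributes zero regardless of how much larger $v_t$ becomes. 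Combining, the value per period of $\tilde s$ strictly exceeds $(T/T')$ times the value per period of $s^{T'}$, so dividing by $T$ yields $\Val(\tilde s) > \Val(s^{T'})$.

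The main technical obstacle is the round-by-round comparison of the two steady-state trajectories—in particular, establishing $\tilde v_0 < v_0$ and quantifying the on-time gain beyond mere linear scaling—both of which rest on the contraction and monotonicity of $f_{S^\ast}$ toward $v^\ast$. A minor wrinkle is the integer constraint on round counts when $\alpha^\ast T$ is not an integer; this is handled by the fractional-samples assumption, which lets us interpret the on-off boundary as in the policy definition and interpolate the analysis in terms of $\alpha^\ast T$ treated as a real parameter.
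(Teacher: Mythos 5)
Your construction is exactly the one the paper uses (extend $s^{T'}$ to period $T$ with the same rate $S^\ast$ and fraction $\alpha^\ast$), so the reduction $\Val(s^T)\geq\Val(\tilde s)$ is fine; the problem is in the value accounting, which is where all the content of the lemma lives. First, the step that is supposed to produce the needed gain is wrong as stated: you need the per-period value of $\tilde s$ to exceed $(T/T')$ times that of $s^{T'}$, and you attribute the surplus to the $\alpha^\ast(T-T')$ ``extra'' on-rounds, claiming they have variance close to $v^\ast$ and each contribute nearly $c-v^\ast$. But under the alignment that makes your other claim true --- matching the two on-blocks at their right endpoints, so that $\tilde v_0 < v_0$ and monotonicity of $f_{S^\ast}$ give pointwise domination on the matched tail --- the extra rounds are the \emph{earliest} on-rounds, immediately after the longer off stretch, where the variance is at its peak (typically above $c$, contributing zero), not near $v^\ast$. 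If instead you align at the left endpoints so the extra rounds sit at the bottom of the descent, then the matched rounds of $\tilde s$ start from a strictly higher peak and lie pointwise \emph{above} those of $s^{T'}$, and the ``contribute no less'' claim fails. Either way, the ingredients you list only give (per-period value of $\tilde s$) $\geq$ (per-period value of $s^{T'}$), which after dividing by $T>T'$ proves nothing; the required extra $(T-T')\Val(s^{T'})$ worth of per-period value is never accounted for.

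Second, the argument is asymptotic in flavor (``nearly $c-v^\ast$,'' off-time ``changes by at most a bounded additive constant'') while the lemma is a strict inequality for \emph{every} pair $T>T'$, including $T=T'+1$, where such slack buys nothing: the comparison is between two exact steady-state per-period averages, so no bounded additive error can be discarded. Relatedly, the fixed-point comparison $\tilde v_0 < v_0$ is precisely the nontrivial trade-off (longer off-time raises the peak, longer on-time contracts more), and your plan asserts it with the intuition that contraction ``offsets'' the higher peak rather than proving it; for $T$ close to $T'$ the crude contraction bounds do not settle it. For contrast, the paper's proof avoids this round-by-round matching of the two steady states altogether: it keeps the same on rate and argues directly from the fact that the variance is decreasing throughout the on block (so the policy's value accrues fastest at the end of the on period), concluding that lengthening the cycle can only improve the long-run average. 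Your route could in principle be repaired, but as written both the monotonicity of the composed map's fixed point and the $\tilde W > (T/T')W'$ step are genuine gaps.
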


Write $V^* = \sup_{T \to \infty} \Val(s^T)$.  Lemma~\ref{lem:onoff.monotone} implies that $V^* = \lim_{T \to \infty} \Val(s^T)$ as well.  
We show that no policy satisfying the budget constraint can achieve value greater than $V^*$.

\begin{theorem}
\label{thm:opt}
With fractional samples, the value of any valid policy $s$ is at most $V^*$.
\end{theorem}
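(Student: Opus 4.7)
My plan is to show $\Val(s) \leq V^*$ for any valid policy $s$ by analyzing long finite-horizon initial segments of $s$ and transforming them into on-off structures through value-preserving or value-increasing modifications, then appealing to $V^* = \lim_{T \to \infty} \Val(s^T)$ via Lemma~\ref{lem:onoff.monotone}.

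Fix a valid policy $s$ and any $\epsilon > 0$. By the $\liminf$ definition of value, I would choose a horizon $T$ large enough that the average value of $s$ over $[1,T]$ exceeds $\Val(s) - \epsilon$, while $\sum_{t=1}^T s_t \leq BT$ by validity. The goal is to show this average is at most $V^*$ plus terms vanishing as $T \to \infty$.

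I would modify the restriction of $s$ to $[1,T]$ via two successive exchange steps without decreasing the total value $\sum_t \max(c-v_t,0)$. First, consolidate samples in time: if $s$ samples in a round $t_1$ and also in some later round $t_2$, then shifting sample mass from $t_1$ to $t_2$ weakly increases cumulative value. Here the Kalman recursion from Lemma~\ref{lem:kalman} is the key tool: moving a sample later applies it to a larger variance and (by the concavity of $v \mapsto v/(1+(s/\sigma)v)$) produces a larger absolute reduction in downstream variance, while any intermediate increase of $v_t$ is irrelevant to $\max(c-v_t,0)$ whenever $v_t$ is pushed above $c$. Second, within a consolidated sampling block, argue that a constant per-round sampling rate is optimal in the long-period limit: the variance contracts (by Lemma~\ref{lem.contraction}) to a unique steady state depending only on the rate $S$, and by a convexity argument on the recursion, the average per-round value at the steady state dominates the average value of any varying-rate schedule with the same total sample count.

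After these transformations, the modified policy fits the on-off template with some period $T$ and rate $S$, so its per-round value is at most $\Val(s^T) \leq V^*$. Chaining the inequalities gives $\Val(s) - \epsilon \leq V^* + o(1)$ as $T \to \infty$; taking $\epsilon \to 0$ yields the theorem.

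The main obstacle is making the time-consolidation exchange rigorous: the nonlinear variance recursion and the kink of $\max(c - v, 0)$ at $c$ require careful bookkeeping when the variance trajectory crosses the threshold $c$ during the intermediate rounds between $t_1$ and $t_2$, and boundary effects at the ends of the interval $[1,T]$ must be shown to be $o(T)$. The assumption of fractional sampling is essential here, as it lets us shift arbitrary continuous amounts of sample mass and avoid discretization artifacts when establishing the exchange inequality.
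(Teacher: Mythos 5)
Your first exchange step is where the argument breaks. The claim that shifting sample mass from a round $t_1$ to a later round $t_2$ ``weakly increases cumulative value'' is false whenever the variance in the intervening rounds is (or would remain) below $c$: you forfeit the value $\max(c-v_t,0)$ accrued there, and this loss is not compensated by the larger downstream reduction. Indeed, policies that consolidate samples in exactly this way are the paper's lazy policies, and the tightness example for Theorem~\ref{thm:2approx.costs} shows consolidation can cost a factor of $2$; moreover, if your exchange held unconditionally it would push every policy toward rare huge bursts, contradicting the very conclusion of Theorem~\ref{thm:opt} that the optimum is approached by constant-rate ``on'' intervals. The caveat you attach (``whenever $v_t$ is pushed above $c$'') is precisely the unresolved crossing issue you flag at the end, and it is not a technicality: the paper never uses a ``later is always better'' exchange. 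What it proves instead (Lemma~\ref{lem:atomic.optimal.form}, in the spirit of Lemma~\ref{lem:atoms}) is that any maximal excursion above $c$ can be relocated to the beginning of the horizon by shifting the sampling pattern and re-sizing exactly two atoms ($\gamma,\gamma'$) so that the below-$c$ portion of the variance curve is reproduced exactly; this preserves value and validity and yields the ``save first, then stay below $c$'' structure without ever sacrificing below-$c$ value.

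Your second step asserts the real content of the theorem --- that within the below-$c$ block a constant rate dominates any varying schedule with the same sample count --- ``by a convexity argument on the recursion.'' Because a sample affects the variance in all subsequent rounds, no pointwise convexity inequality applies directly; this is exactly the difficulty the paper highlights. Its proof is an amortized potential argument: repeatedly lower the variance at all currently maximal timesteps by $\epsilon/a$ and raise it at all minimal timesteps by $\epsilon/b$ (preserving the average variance, hence the value), and verify through an explicit computation with the recursion of Lemma~\ref{lem:kalman} that this strictly reduces total sample usage; a potential counting extremal timesteps shows the process terminates at constant variance, i.e., an on-off policy, after which Lemma~\ref{lem.contraction} and a periodic-approximation lemma handle the boundary and the passage from finite horizons to long-run averages. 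So while your second step points in the right direction, both steps as written leave the essential work undone, and the first, taken literally, proves a false statement.
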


The proof of Theorem~\ref{thm:opt} proceeds in two steps.  First, for any given time horizon $T$, it is suboptimal to move from having variance below the outside option to above the outside option; one should always save up budget over the initial rounds, then keep the variance below $c$ from that point onward.  This follows because the marginal sample cost of reducing variance diminishes as variance grows, so it is more sample-efficient to recover from very high variance once than to recover from moderately high variance multiple times.  

Second, one must show that it is asymptotically optimal to keep the variance not just below $c$, but uniform.  This is done by a potential argument, illustrating that a sequence of moves aimed at ``smoothing out'' the sampling rate can only increase value and must terminate at a uniform policy.  The difficulty is that a sample affects not only the value in the round it is taken, but in all subsequent rounds.  We make use of an amortization argument that appropriately credits value to samples, and use this to construct the sequence of adjustments 
that
increase overall value while bringing the sampling sequence closer to uniform in an appropriate metric.

We also note that it is straightforward to compute the optimal on-off policy 
for a given time horizon $T$,
by choosing the sampling rate that maximizes [value per round] $\times$ [fraction of time the policy is ``on''].  One can implement a policy whose value asymptotically approaches $V^*$ by repeated doubling of the time horizon.  Alternatively, since $\lim_{T \to \infty} \Val(s^T) = V^*$, $s^T$ will be an approximately optimal policy for sufficiently large $T$.

\section{Approximate Optimality of Lazy Policies}
\label{sec:lazy_approx}

In the previous section we solved for the optimal policy when $z = 0$, meaning that there is no fixed per-round cost when sampling.  We now show that for general $z$, lazy policies are approximately optimal, obtaining at least $1/2$ of the value of the optimal policy.  
All proofs are deferred to Appendix~\ref{sec:appendix.approx}.


We begin with a lemma that states that, for any valid sampling policy and any sequence of timesteps, it is possible to match the variance at those timesteps with a policy that only samples at precisely those timesteps, and the resulting policy will be valid.  

\begin{lemma}\label{lem:atoms}
Fix any valid sampling policy $s$ (not necessarily lazy) with resulting variances $(v_t)$, and any sequence of timesteps $t_1 < t_2 < \dotsc < t_\ell < \dotsc$.  Then there is a valid policy $s'$ such that $\{t ~|~ s'_t  > 0 \} \subseteq \{t_1, \dotsc, t_\ell, \dotsc\}$, resulting in a variances $(\breve{v}_t)$ with $\breve{v}_{t_i} \leq v_{t_i}$ for all $i$.
\end{lemma}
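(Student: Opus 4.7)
The plan is to define $s'$ by bundling the samples that $s$ took in each block $(t_{i-1}, t_i]$ into a single batch at the block's right endpoint, and then to argue block-by-block that this bundling weakly improves variance at each target time. Concretely, set $s'_{t_i} := \sum_{t \in (t_{i-1}, t_i]} s_t$ for each $i \geq 1$ (with $t_0 := 0$) and $s'_t := 0$ for every $t \notin \{t_1, t_2, \dots\}$. Validity at any $T$ follows from two monotone bookkeeping facts: with $i^*$ the largest index with $t_{i^*} \leq T$, a telescoping identity gives $\sum_{t \leq T} s'_t = \sum_{t \leq t_{i^*}} s_t \leq \sum_{t \leq T} s_t$, and the number of rounds up to $T$ in which $s'$ is nonzero is bounded by the analogous count for $s$ (at most one nonzero $s'$-round per block, and only when that block already contained at least one $s$-sample). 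Summing the per-sample and fixed parts of the cost therefore gives $\sum_{t \leq T} C(s'_t) \leq \sum_{t \leq T} C(s_t) \leq BT$.

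The variance comparison is by induction on $i$. The Kalman update $f(v, s) = (v+\rho)/(1+(s/\sigma)(v+\rho))$ from Lemma~\ref{lem:kalman} is monotone non-decreasing in $v$, so composing updates across any fixed intra-block schedule yields a map, from the variance at $t_{i-1}$ to the variance at $t_i$, that is itself monotone. Let $\Phi_s$ and $\Phi_{s'}$ denote these maps on the block $(t_{i-1}, t_i]$ for $s$ and $s'$ respectively. Combining the inductive hypothesis $\breve{v}_{t_{i-1}} \leq v_{t_{i-1}}$ with monotonicity of $\Phi_s$ gives $\Phi_s(\breve{v}_{t_{i-1}}) \leq \Phi_s(v_{t_{i-1}}) = v_{t_i}$, so it suffices to prove $\Phi_{s'}(\breve{v}_{t_{i-1}}) \leq \Phi_s(\breve{v}_{t_{i-1}})$: starting from the same initial variance, concentrating the block's samples at its right endpoint does not increase the variance at $t_i$.

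This last step reduces to a local swap inequality: replacing $(a,b)$ samples at consecutive rounds $(t, t+1)$ with $(0, a+b)$ weakly decreases the variance at $t+1$ (and hence at every later round by monotonicity of $f$). Switching to precisions $p = 1/v$, where drift acts as $p \mapsto \phi(p) := p/(1+p\rho)$ and taking $s$ samples acts as $p \mapsto p + s/\sigma$, and writing $q$ for the precision at $t$ after drift and $r := q + a/\sigma$ for the post-sampling precision at $t$ in the pre-swap schedule, the post-swap minus pre-swap precision at $t+1$ simplifies to
\[
\frac{q}{1+q\rho} + \frac{a}{\sigma} - \frac{r}{1+r\rho} \;=\; \frac{a}{\sigma}\left(1 - \frac{1}{(1+q\rho)(1+r\rho)}\right) \;\geq\; 0.
\]
Iterating this right-shifting swap finitely many times (push the rightmost nonzero entry to position $t_i$, then the next-rightmost, etc.) converts the intra-block schedule of $s$ into that of $s'$, closing the induction. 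The main obstacle is that the drift--sampling recursion is nonlinear, which rules out an additive ``precision contributions'' accounting across time; the swap identity above is the essential algebraic ingredient that works around this nonlinearity.
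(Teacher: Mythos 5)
Your proof is correct and follows essentially the same route as the paper's: both consolidate samples forward onto the designated timesteps and reduce to the single inequality that postponing samples past a drift step weakly lowers the variance at the later time (your precision-space swap identity is exactly the paper's variance-space merging inequality), iterated via monotonicity of the Kalman update. Your more explicit bookkeeping for validity under the fixed per-round cost is a nice touch but not a different approach.
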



The intuition is that if we take all the samples we would have spent between timesteps $t_\ell$ and $t_{\ell+1}$ and instead spend them all at $t_{\ell+1}$ the result will be a (weakly) lower variance at $t_{\ell+1}$.
We next show that any policy can be converted into a lazy policy at a loss of at most half of its value.

\begin{figure*}[t]
\centering
\includegraphics[width=0.7\textwidth]{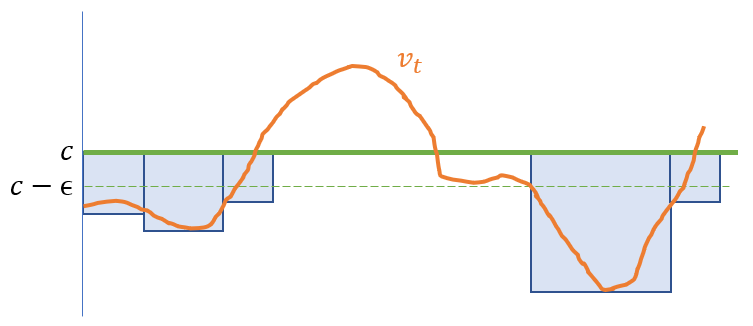}
\caption{Visualizing the construction in the proof of Theorem~\ref{thm:2approx.costs}.  Variance (vertical) is plotted against time (horizontal).  We approximate the value of an optimal policy's variance (orange) given $c$ (green).  
The squares (drawn in blue) cover the gap between the curves, except possibly when $|v_t - c| < \epsilon$ (for technical reasons).  The lazy policy samples on rounds corresponding to the left edge of each square, bringing the variance to each square's bottom-left corner.}
\label{fig:squares}
\end{figure*}

\begin{theorem}\label{thm:2approx.costs}
The optimal lazy policy is $1/2$-approximate.
\end{theorem}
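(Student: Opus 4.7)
The plan is to take any valid policy $s^*$ with variance sequence $(v_t^*)$ and long-run average value $V^*$ (with $V^*$ arbitrarily close to the optimum), and explicitly construct a valid lazy policy $s^\ell$ whose value is at least $V^*/2$. The construction is the one depicted in Figure~\ref{fig:squares}: I cover the ``value region'' $R^* = \{(t,v) : v_t^* \le v \le c\}$ by non-overlapping axis-aligned squares, each of which has its top edge on the line $v = c$. Concretely, I sweep time from left to right, and whenever $R^*$ still has uncovered area at the sweep position, I place the largest square that fits inside $R^*$ with its left edge at that position and top edge on $v=c$. Each such square, with side length $s$ and bottom-left corner $(t_0, c - s\rho)$, defines one sampling event of the lazy policy: sample at $t_0$ so as to bring variance down to $c - s\rho$.

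There are three things to check. First, the policy is genuinely lazy: by design each square's right edge arrives at $v=c$ exactly when the next square's left edge starts, so the lazy variance has just returned to threshold when the next sample fires, satisfying $\tilde v_{t_0} \ge c$. Second, the policy must be valid. I will apply Lemma~\ref{lem:atoms} with the set of timesteps equal to the squares' left edges to produce a concrete sampling schedule realizing these target variances while remaining within budget; grouping samples into fewer rounds can only reduce the fixed costs $z\cdot \mathbbm{1}_{s_t>0}$, so feasibility is preserved. Third, and this is the crux, the lazy value is at least $V^*/2$. A sample bringing variance to $c - s\rho$ produces a triangular region of positive value, bounded above by $v = c$ and below by the line $v_0 + (t-t_0)\rho$, of area exactly $\tfrac{1}{2}s^2\rho$, whereas the associated square has area $s\cdot s\rho = s^2\rho$. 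Since the squares are disjoint and the triangles are disjoint in time (each lies in a single tooth of the lazy sawtooth),
\[ \Val(s^\ell) \;\ge\; \sum_{Q} \tfrac{1}{2}\,|Q| \;\ge\; \tfrac{1}{2}\,|R^*| \;=\; \tfrac{1}{2}\,V^*.\]

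The main obstacle I anticipate is ensuring that the greedy square cover actually exhausts $R^*$ up to negligible error. For example, a thin strip of width $\epsilon$ just below $v=c$ may be left uncovered whenever $v_t^*$ hovers close to $c$ (exactly the exception flagged in the caption of Figure~\ref{fig:squares}). I plan to handle this by first perturbing $s^*$ so that the variance stays at distance at least $\epsilon$ below $c$ whenever the policy earns value---losing only $O(\epsilon)$ in the long-run average---covering the resulting region with the greedy construction, and then taking $\epsilon \to 0$. A secondary obstacle is aligning continuously-valued variances with the discrete-time index; off-by-one issues from treating the geometric squares as tiling an integer time grid should be absorbed into the same $\epsilon$-tolerance. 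Neither of these difficulties appears to interfere with the clean $\tfrac{1}{2}$ bound from the triangle-versus-square area ratio, which is the geometric heart of the argument.
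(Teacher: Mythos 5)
There is a genuine gap, and it sits at the two load-bearing steps of your chain of inequalities. First, your squares are specified as the \emph{largest squares that fit inside} $R^*$ (top edge on $v=c$, left edge at the sweep position). Inscribed, disjoint squares satisfy $\sum_Q |Q| \le |R^*|$, not $\sum_Q |Q| \ge |R^*|$, so the step $\sum_Q \tfrac12 |Q| \ge \tfrac12 |R^*|$ is backwards; a narrow deep dip in $v^*_t$ forces the inscribed square to be small while the uncovered area below it is large. What the argument needs (and what the paper does) is the opposite choice: the side of each square is the \emph{smallest} length $\delta$ such that $c - v^*_t \le \delta$ throughout the square's own time footprint, so the square \emph{contains} the slice of $R^*$ over that footprint and the squares genuinely cover the optimal policy's value. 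With that choice, the ``triangle $\ge$ half the square'' step is no longer an identity: the lazy triangle's legs equal the depth $c - v^*_{m_i}$ of the dip (where $m_i$ is the minimizer of $v^*$ on the interval), and one must invoke the minimality of $\delta$ to argue this depth is within one unit (or $\epsilon$, in the continuous model) of the side, which is exactly where the slack you relegate to an ``$\epsilon$-tolerance'' actually has to be argued.

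Second, with covering squares your validity argument fails as stated. The lazy policy must, at the \emph{left} endpoint $t_i$ of a square, reach a variance (the square's bottom-left corner, or the paper's choice $v^*_{m_i}$) that is \emph{below} the original policy's variance at $t_i$ and is only attained by the original policy later in the interval. Lemma~\ref{lem:atoms} cannot deliver this: it only shifts samples \emph{later} and only guarantees $\breve v_{t_i} \le v^*_{t_i}$ at the chosen times. The paper handles this with a two-stage construction: first a policy sampling at both $\{t_i\}$ and $\{m_i\}$, which is valid by Lemma~\ref{lem:atoms} (and, for $z>0$, pays fixed costs only on rounds where the original sampled); then a policy that moves the $m_i$-atoms to $t_i$, which may run a budget deficit of at most $cB$ because each interval has length at most $c$, and this deficit is repaired by delaying the policy's start by $c$ rounds, which costs nothing asymptotically. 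Your proposal, by targeting the corner and citing Lemma~\ref{lem:atoms} alone, skips precisely this step. (Conversely, if you keep inscribed squares so that Lemma~\ref{lem:atoms} does suffice for validity, you lose the covering inequality and hence the factor $1/2$.) So the geometric picture is the right one, but as written neither branch of the construction yields the claimed bound without the paper's additional arguments.
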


See Figure~\ref{fig:squares} for an illustration of the intuition behind the result.  Consider an arbitrary policy $s$, with resulting variance sequence $(v_t)$.  Imagine covering the area between $(v_t)$ and $c$ with squares, drawn left to right with their upper faces lying on the outside option line, each chosen just large enough so that $v_t$ never falls below the area covered by the squares.
The area of the squares is an upper bound on $\Val(s)$.  Consider a lazy policy that drops a single atom on the left endpoint of each square, bringing the variance to the square's lower-left corner.  The value of this policy covers at least half of each square.  Moreover, Lemma~\ref{lem:atoms} implies this policy is (approximately) valid, as it matches variances from the original policy, possibly shifted early by a constant number of rounds.  This shifting can introduce non-validity; we fix this by delaying the policy's start by a constant number of rounds without affecting the asymptotic behavior.

The factor $1/2$ in Theorem~\ref{thm:2approx.costs} is tight.  To see this, fix the value of $c$ and allow the budget $B$ to grow arbitrarily large.  Then the optimal value tends to $c$ as the budget grows, since the achievable variance on all rounds tends to $0$.  However, the lazy policy cannot achieve value greater than $c/2$, as this is what would be obtained if the variance reached $0$ on the rounds on which samples are taken.

Finally, while this result is non-constructive, one can compute a policy whose value approaches an upper bound on the optimal lazy policy, in a similar manner to the optimal on-off policy.  One can show the best lazy policy over any finite horizon has an ``off'' period (with no sampling) followed by an ``on'' period (where $v_t \leq c$).  One can then solve for the optimal number of samples to take whenever $\tilde{v}_t > c$ by optimizing either value per unit of (fixed plus per-sample) sampling cost, or by fully exhausting the budget, whichever is better.  
See Lemma~\ref{lem:solve-optimal-atomic-discrete.costs} in the appendix for details.

\section{Extensions and Future Directions}
\label{sec:extension}


We describe two extensions of our model in the appendix.  First, we consider a continuous-time variant where samples can be taken continously
subject to a flow cost, in addition to being requested as discrete atoms.  The decision-maker selects actions continuously, and aims to minimize loss over time.  All of our results carry forward to this continuous extension.  



Second, returning to discrete time, we consider a non-Gaussian instance of our framework.  
In this model, there is a binary hidden state of the world, which flips each round independently with some small probability $\epsilon > 0$.  The decision-maker's action in each round is to guess the hidden state of this simple two-state Markov process, and the objective is to maximize the fraction of time that this guess is made correctly.  Each sample is a binary signal correlated with the hidden state, matching the state of the world with probability $\tfrac{1}{2} + \delta$ where $\delta > 0$.  The decision-maker can adaptively request samples in each round, subject to the accumulating budget constraint, before making a guess.

\begin{figure*}[t]
\centering
\includegraphics[width=\textwidth]{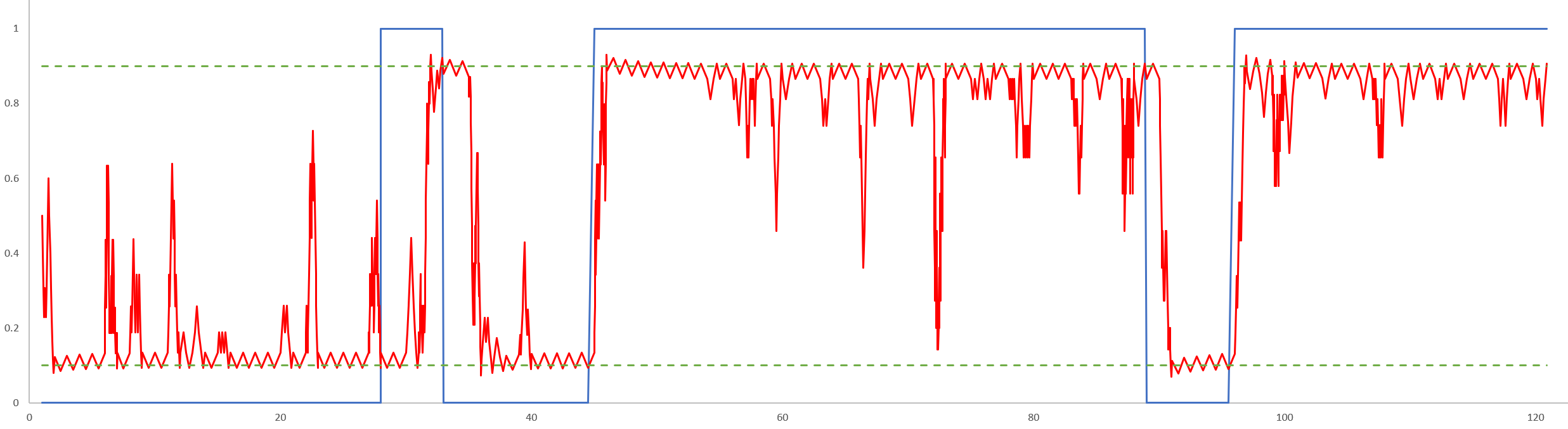}
\caption{Simulating the optimal policy for the non-Gaussian extension.  The round number is on the horizontal axis.  The hidden state of the world is binary and evolves stochastically (blue).  The optimal policy tracks a posterior distribution over the hidden state (red), and takes samples in order to maintain a tuned level of certainty (dashed green).  Note that most rounds have only a small number of samples, with occasional spikes triggered adaptively in response to uncertainty.}
\label{fig:extension}
\end{figure*}

In this extension, as in our Gaussian model, the optimal policy collects samples non-uniformly.  In fact, the optimal policy has a simple form: it sets a threshold $\theta > 0$ and takes samples until the entropy of the posterior distribution falls below $\theta$. 
Smaller $\theta$ leads to higher accuracy, but also requires more samples on average, so the best policy will set $\theta$ as low as possible subject to the budget constraint.  Notably, the result of this policy is that sampling tends to occur at a slow but steady rate, keeping the entropy around $\theta$, except for occasional spikes of samples in response to a perceived change in the hidden state.  See Figure~\ref{fig:extension} for a visualization of a numerical simulation with a budget of $6$ samples (on average) per round.

More generally, whenever the state evolves in a heavy-tailed manner, it is tempting to take samples regularly in order to detect large, infrequent
jumps in state value, and then adaptively take many samples when such a jump is evident.  This simple model is one scenario where such behavior is optimal.  More generally, can we quantify
the dynamic value of data and find an (approximately) optimal data collection policy for more complex Markov chains, or other practical applications?


\bibliographystyle{plain}
\bibliography{TemporalData}


\appendix

\section{Appendix: A Continuous Model}
\label{sec:model.continuous}

We now define a continuous version of our optimization problem, which is useful for modeling big-data situations in which the value of individual data samples is small, but the budget is large enough to allow the accumulation of large datasets.  Our continuous model will correspond to a limit of discrete models as the variance of the sampling errors and the budget $B$ grow large.

Our first step is to consider relaxing the discrete model to allow fractional samples.  This extends Lemma~\ref{lem:kalman} so that $s$ can be fractional.  Note that since the effect of taking $s$ samples, from Lemma~\ref{lem:kalman}, depends on the ratio $s/\sigma$, we can think of taking a fraction $\alpha < 1$ of a sample with variance $\sigma$ as equivalent to taking a single sample with variance $\sigma / \alpha$.  With this equivalence in mind, we can without loss of generality scale the variance of samples so that $\sigma = 1$; this requires only that we interpret the sample budget and numbers of samples taken as scaled in units of inverse variance.

Next, in the continuous model, the hidden state $x_t$ evolves continuously for $t \geq 0$.  The initial prior is Gaussian with some fixed variance $v_0$.  At each time $t$ we will have a posterior distribution over the hidden state.  Write $v(t)$ for the variance of the posterior distribution at time $t \geq 0$.  In particular, we have the initial condition $v(0) = v_0$.

Samples can be collected continuously over time at a specified density, as well as in atoms at discrete points of time.  As discussed above, we assume without loss of generality that the variance of a single sample is equal to $1$.
Write $s(t)$ for the density at which samples are extracted at time $t$, and write $a(t)$ for the mass of samples collected as an atom at time $t$.  Assume atoms are collected at times $\{t_i,i\in\mathbb{N}\}$, i.e., $a(t)>0$ only if $t\in\{t_i\}$. Both $s$ and $a$, as well as the times $\{t_i\}$ are chosen by the decision-maker.  

To derive the evolution of the hidden state and the variance $v(t)$ of the posterior at time $t$, we interpret this continuous model as a limit of the following discretization.  Partition time into intervals of length $\epsilon$, say $[t,t+\epsilon)$ for each $t$ a multiple of $\epsilon$.  We will consider a discrete problem instance, with discrete rounds corresponding to times $\epsilon, 2\epsilon, 3\epsilon, \dotsc$.  At round $i$, corresponding to time $t = i \cdot \epsilon$, a zero-mean Gaussian with variance $\epsilon$ is added to the state.  I.e., we take $\rho = \epsilon$ in our discrete model.  
%
We then imagine drawing $\int_{t = (i-1)\epsilon}^{i\epsilon} s(t)dt + \sum_{j:t_j \in [(i-1)\epsilon, i\epsilon)}a(t_j)$ samples at round $i$, corresponding to time $t = i\cdot \epsilon$. This represents the samples that would have been drawn over the course of the interval $[(i-1)\epsilon, i\epsilon)$.  We will also take the budget in this discrete approximation to be $B/\epsilon$, so that the approximation is valid if the continuous policy satisfies its budget requirement.  Note that we can think of this discretization as an approximation to the continuous problem, with the same budget $B$, but with time scaled by a factor of $\epsilon$ so that a single round in the discrete model corresponds to a time interval of length $\epsilon$ in the continuous model. 

Approximate $s$ by a function that is constant over intervals of length $\epsilon$, say equal to $s(t)$ over the time interval $[t, t+\epsilon)$ for each $t$ a multiple of $\epsilon$.  Suppose for now that there are no atoms over this period.
We are then drawing $\epsilon s(t)$ samples at time $t+\epsilon$.
By Lemma~\ref{lem:kalman}, this causes the variance to drop by a factor of $1+\epsilon s(t) v(t)$.
The new variance at time $t+\epsilon$ is therefore
\[ v(t+\epsilon)= \frac{v(t)+\epsilon}{1+\epsilon s(t) (v(t) + \epsilon)} 
\]
As this change occurs over a time window of length $\epsilon$, the average rate of change of $v$ over this interval is
\[\frac{v(t+\epsilon)-v(t)}{\epsilon}=\frac{1}{\epsilon}\left(\frac{v(t)+\epsilon}{1+\epsilon s(t) (v(t)+\epsilon)}-v(t)\right)
=\frac{1-s(t)v^2(t)-\epsilon s(t)v(t)}{1+\epsilon s(t)(v(t)+\epsilon)}\]
Taking the limit as $\epsilon \to 0$, the instantaneous rate of change of $v$ at $t$ is
\[ v'(t) = 1 - s(t) \cdot v^2(t). \]
The variance function $v$ is therefore described by the differential equation above, for any $t$ at which $a(t) = 0$.

If there is an atom at $t$, so that $a(t) > 0$, then for sufficiently small $\epsilon$ the number of samples in the range $[t, t+\epsilon)$ is instead $a(t) + \epsilon s(t)$.   As $\epsilon \to 0$, this introduces a discontinuity in $v(\cdot)$ at $t$, since the number of samples taken does not vanish in the limit.  With this in mind, we will take the convention that $v(t)$ represents $\lim_{t' \to^+ t}v(t)$, the right-limit of $v$; this informally corresponds to the variance ``after'' having taken atoms at $t$.  We then define $\tilde{v}(t) = \lim_{t' \to^- t}v(t)$ to be the variance ``before'' applying any such atom.  Lemma~\ref{lem:kalman} then yields that
\[ v(t) = \frac{\tilde{v}(t)}{1+a(t) \tilde{v}(t)}. \]
We emphasize that, under this notation, $v(t)$ represents the variance after having applied the atom at time $t$, if any.  These discontinuities combined with the differential equation above provide a full characterization of the evolution of the variance $v(\cdot)$, given $s(\cdot)$ and $a(\cdot)$ and the initial condition $v(0) = v_0$.

Then the total number of samples acquired over a time period $[0,T]$ is 
$\int_0^Ts(t)dt+\sum_{i:t_i\in[0,T]}a(t_i)$.
We normalize the cost per sample to 1 as in the discrete case, but modeling the fixed costs is more subtle.  In particular, consider some intermediate discretization.  If we apply the fixed cost $f$ at each interval, we get the counterintuitive result that taking $s$ samples today is less expensive than taking $s/2$ samples in the morning and $s/2$ in the afternoon, when logically the two should have at least similar costs.  On the other hand if we scale the cost to be $f/2$ we could only ever take samples in the morning and implement the same policy as we would have at the ``day'' level at half the fixed cost.  To avoid this we make the fixed cost history dependent.  If the fixed cost was not paid in the previous interval, the fixed cost is $f$.  If the fixed cost was paid in the previous interval, the cost is instead $\epsilon f$.  This ensures that the cost of implementing a policy from the original level of discretization at a finer level has the exact same cost, while keeping policies which spread out samples evenly over the interval at a similar cost.  Furthermore, to allow similar properties to hold when considering multiple possible levels of discretization, we allow the decision maker to pay the fixed cost even in periods when samples are not taken.  Thus, for example, taking samples in early morning and early afternoon but none in the late morning has the same cost as taking the same samples in the morning and in the afternoon.  This interpretation has natural analogs in some of our example scenarios, such as maintaining the satellite lock for the GPS even if limited numbers of samples are taken.

In the continuous limit, this cost model becomes a flow cost of $f$, which will be paid at all times samples are taken as well as during intervals when samples are not taken of length at most 1.  There is also a fixed cost $f$ when sampling resumes after an interval of length greater than 1.  Let $\phi$ be a measure which has density $f$ when when the flow cost is paid and measure $f$ at times when the fixed cost is paid.
Our budget requirement is that 
$\int_0^T s(t)dt+\sum_{i:t_i\in[0,T]}a(t_i) + \int_0^T d\phi \leq BT$ for all $T \geq 0$.  
%



The optimization problem in the continuous setting is to choose the functions $s(t)$ and $a(t)$ that minimizes 
the long-run average cost incurred by the decision-maker,
\[ \lim\sup_{T \to \infty}\frac{1}{T} \int_0^T \min(v(t), c)dt, \]
subject to the budget constraint.

Similar to the discrete setting, we define the \emph{null policy} to be the sampling policy that takes no samples (i.e., $s(t)=0$ and $a(t) = 0$ for all $t$) and selects the outside option at all times, for an average cost of $c$.  Again, we define the value of a policy to be the difference between its average cost and the average cost of the null policy:
\[ \lim\inf_{T \to \infty} \frac{1}{T} \int_0^{T} \max(c-v_t,0)dt. \]
We say a policy is $\alpha$-approximate if it achieves an $\alpha$ fraction of the value of the optimal policy.

\section{Appendix: Proofs from Section~\ref{sec:model.discrete}}
\label{app:model}

\newtheorem*{lem.kalman}{Lemma \ref{lem:kalman}}
\begin{lem.kalman}
Let $v_t$ be the variance of $G_t$ and suppose each $F_t$ is a zero-mean Gaussian with variance $\rho$, and that each sample is subject to zero-mean Gaussian noise with variance $\sigma$. Then, if the decision-maker takes $s$ samples in round $t+1$, the variance of $G_{t+1}$ is 
\[v_{t+1}=\frac{v_t+\rho}{1 + \frac{s}{\sigma}(v_t+\rho)}.\]
\end{lem.kalman}
\begin{proof}
Our model is a special case of the model underlying a Kalman filter.  There, generally, the evolution of the state can depend on a linear transformation of $x_t$, a control input, and some Gaussian noise.  In our model the transformation of $x_t$ is the identity, there is no control input, and by assumption the Gaussian noise is mean 0 variance $\rho$. Similarly, our sampling model corresponds to the observation model assumed by a Kalman filter.

Therefore, using the standard update rules for a Kalman filter~\cite{roweis1999unifying}, the innovation variance at time $t+1$ (i.e., the variance of the posterior after $x_t$ is updated by $\delta\sim F_{t+1}$ but before observing the samples) is $\tilde{v}_{t+1} = v_t + \rho$.  (Alternatively this can be observed directly as we are summing two Gaussians.)  This matches the desired quantity for the case $s=0$, where no samples are taken. 

For $s = 1$, we can again apply the standard update rules for a Kalman filter to get a posterior variance of $\frac{\tilde{v}_{t+1}}{1 + \frac{1}{\sigma}\tilde{v}_{t+1}}$, as desired.  By induction, if the decision maker instead takes $s > 1$ samples, the posterior variance will instead be $\frac{\tilde{v}_{t+1}}{1 + \frac{s}{\sigma}\tilde{v}_{t+1}}$, as desired.
\end{proof}

\section{Appendix: Proofs from Sections~\ref{sec:piecewise} and~\ref{sec:optimal}}
\label{app:opt}

All of these results hold in both the discrete and continuous models with essentially the same proofs.  Therefore, we provide  a unified treatment of them for both cases.

\newtheorem*{lem.contraction}{Lemma \ref{lem.contraction}}
\begin{lem.contraction}
Fix a sampling policy $s$, and a time $R > 0$, in either the continuous or discrete setting, and suppose that $s$ takes a strictly positive number of samples in $(0,R]$.  Let $\Psi$ be the mapping defined as follows: supposing that $v_0 = x$ and $v$ is the variance function resulting from sampling policy $s$, set $\Psi(x) := v(R)$.  Then $\Psi$ is a contraction map over the non-negative reals, under the absolute value norm.
\end{lem.contraction}
\begin{proof}
We need to show $|\Psi(x) - \Psi(y)| \leq k \cdot |x - y|$ whenever $x > y$, where $k < 1$ is some constant that depends on $s$.  We'll prove this first for a discrete policy.  Write $v^x_t$ and $v^y_t$ for the variance at time $t$ with starting condition $x$ and $y$, respectively.  Take $v^x_0 = x$ and $v^y_0 = y$ for notational convenience.  We then have that $v^x_1 = \frac{x+\rho}{1+(s_1/\sigma)(x+\rho)}$ and $v^y_1 = \frac{y+\rho}{1+(s_1/\sigma)(y+\rho)}$ by Lemma~\ref{lem:kalman}.  We then have that $v^x_1 \geq v^y_1$, and moreover
\[ v^x_1 - v^y_1 = \frac{x-y}{(1+(s_1/\sigma)(x+\rho))(1+(s_1/\sigma)(y+\rho))} \leq x-y \]
and the inequality is strict if $s_1 > 0$.  We can therefore apply induction on the rounds in $(0,R]$, plus the assumption that at least one of these rounds has a positive number of samples, to conclude that $\Psi(x) - \Psi(y) < x - y$.  To bound the value of $k$, suppose that $t \geq 1$ is the first round in which a positive number of samples is taken.  Then we can find some sufficiently small $\epsilon > 0$ so that $v_{t-1}^x > \epsilon$ and $s_t/\sigma > \epsilon$.  Then we will have
\begin{align*}
\Psi(x) - \Psi(y) &\leq v_{t}^x - v_{t}^y \\
&= \frac{v_{t-1}^x - v_{t-1}^y}{(1+(s_1/\sigma)(v_{t-1}^x+\rho))(1+(s_1/\sigma)(v_{t-1}^y+\rho))} \\
&\leq \frac{x - y}{1+(s_1/\sigma)(v_{t-1}^x+\rho)}\\ 
&< \frac{x-y}{1+\epsilon^2}
\end{align*}
and hence we have $k \leq \frac{1}{1+\epsilon^2} < 1$ as required.

To extend to continuous policies, take $v^x(t)$ and $v^y(t)$ for the variances under the two respective start conditions, and note first that there must exist some sufficiently small $\epsilon$ such that $v^x(t) > \epsilon$ for all $t \in [0,R]$, 
and for which the total mass of samples taken over range $(0,R]$ is at least $\epsilon$.  Take any discretization of the range $[0,R]$, say into $r > 1$ rounds, and consider the corresponding discretization of the continuous policy, so that the sum of the number of samples taken over all discrete rounds in the interval $[0,R]$ is at least $\epsilon$.  As above, take $v_i^x$ and $v_i^y$ to be the variances resulting from these discretized policies after $i$ discrete rounds.
Say $s_i$ samples are taken at round $i$.  Then, considering each round in sequence and applying the same reasoning as in the discrete case above, we have that
\begin{align*}
v^x_r - v^y_r 
&\leq (x-y) \cdot \prod_{i = 1}^r \frac{1}{1 + s_i v^x_{i-1} }\\
&\leq (x-y) \cdot \prod_{i = 1}^r \frac{1}{1 + s_i \epsilon }\\
&\leq (x-y) \cdot \frac{1}{1+ \epsilon \sum_{i=1}^r s_i} \\
&\leq \frac{x-y}{1 + \epsilon^2}.
\end{align*}
Thus, for each such discretization, we have a contraction by a factor of at least $\frac{1}{1+\epsilon^2}$.  Taking a limit of such discretizations, we conclude that this holds in the continuous limit as well, so that $\Psi(x) - \Psi(y) < \frac{1}{1+\epsilon^2}(x-y)$ as required.
%
%
\end{proof}

It is well known that a contraction mapping has a unique fixed point, and repeated application will converge to that fixed point.  Since we can view the impact of the periodic sampling policy as repeated application of mapping $\Psi$ to the initial variance in order to obtain $v(0), v(R), v(2R), \dotsc$, we conclude that the variance will converge uniformly to a periodic function for which $v(t) = v(t+R)$.  Thus, for the purpose of evaluating long-run average cost of a periodic policy, it will be convenient (and equivalent) to replace the initial condition on $v$, $v(0) = v_0$, with a periodic boundary condition $v(0) = v(R)$, and then choose $s$ to minimize the average cost over a single period:
\[ \frac{1}{R} \int_0^R \min\{v(t), c\}dt, \]
subject to the budget constraint that, at any time $T\in(0,R]$, we have\footnote{Note that we omit $t_i = 0$ from the summation over atoms, to handle the edge case where there is an atom at time $T$, and hence at time $0$ as well, which should not be counted twice.} $\int_0^T s(t)dt + \sum_{i:t_i \in (0,T]}a(t_i) \leq BT$.

For the remainder of the proofs in this section, we will allow fractional sampling rates even in the discrete setting.  Recall that one can define an $\alpha \in (0,1)$ fraction of a sample to be one in which the variance is increased by a factor of $1/\alpha$.

\newtheorem*{lem:onoff.monotone}{Lemma \ref{lem:onoff.monotone}}
\begin{lem:onoff.monotone}
With fractional samples, for all $T > T'$, we have $\Val(s^T) > \Val(s^{T'})$.
\end{lem:onoff.monotone}
\begin{proof}
If $s$ is an on-off policy with period $T'$, then the policy that uses the same ``on'' sampling rate with a period of $T$ has weakly better average value.  This is because the variance of this policy is decreasing over the ``on'' period, so is lowest at the end of the period.  Thus the optimal on-off policy with period $T$ has better average value than $s$.
\end{proof}

We will write $V^* = \sup_{T \to \infty} \Val(s^T)$.  From the lemma above, we have that $V^* = \lim_{T \to \infty} \Val(s^T)$ as well.  The following lemma will be useful for analyzing non-periodic policies.

\begin{lemma}
\label{lem:approx.periodic}
For any policy $s$, there is a sequence of policies $\{s^T\}$ for $T = 1, 2, \dotsc$, where $s^T$ is periodic with period $T$, such that $\lim_{T \to \infty} Val(s^T) \geq Val(s)$.
\end{lemma}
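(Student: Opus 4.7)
The plan is to construct, for each $T$, a periodic policy $s^T$ of period $T$ with $\liminf_T \Val(s^T) \geq \Val(s)$; since $\Val$ is bounded above by $c$, passing to a convergent subsequence then yields the stated limit. Each period of $s^T$ is split into a short constant-length ``reset'' phase that drives the periodic variance down to a tiny target $\epsilon$, followed by a long ``simulation'' phase whose variance matches that of $s$ on $[1,R]$ exactly, where $R = T-K$.

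Concretely, fix $\epsilon \in (0,\rho)$, set $m = \lceil \sigma/\epsilon \rceil$, and let $K$ be a large enough constant that $BK$ exceeds $m$ by more than a fixed $O(\epsilon)$ slack (both $m$ and $K$ depend only on $\epsilon, \sigma, \rho, B$). For $T > K$ define $s^T$ of period $T$ by: take $0$ samples in rounds $1,\ldots,K-1$; take $m$ samples in round $K$; then in rounds $K+1,\ldots,T$ take $s'_1, s_2, \ldots, s_R$ samples, where the $s_r$ are $s$'s own rates and $s'_1 := s_1 + \sigma \bar v_K / (\rho(\bar v_K + \rho))$ is a small correction. For $T \leq K$ let $s^T$ be the null policy. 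By Lemma~\ref{lem:kalman}, the burst at round $K$ caps the post-sample variance at $\sigma/m \leq \epsilon$ regardless of the pre-sample value, so $\bar v_K \leq \epsilon$ no matter what the periodic fixed point $\bar v_0$ is. The correction $s'_1$ is tuned so that applying $s'_1$ samples from $\bar v_K$ yields exactly $\bar v_{K+1} = v_1^s$; induction along the Kalman recurrence then gives $\bar v_{K+r} = v_r^s$ for all $r \in [1,R]$. Since $\bar v_K \leq \epsilon$ the correction satisfies $s'_1 - s_1 = O(\epsilon/\rho^2)$, so the extra budget it demands plus the $m$ samples at round $K$ are comfortably covered by the budget accumulated during the $K-1$ idle rounds, and $s^T$ respects the budget constraint at every time (and across periods, since cumulative cost per period is at most $BT$).

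Because $\bar v_{K+r} = v_r^s$ exactly, $T \cdot \Val(s^T) \geq \sum_{r=1}^R \max(c - v_r^s, 0) = R \cdot a_R$, where $a_R$ is the average value of $s$ on $[1,R]$. Taking $T \to \infty$ with $K$ fixed gives $R/T \to 1$, and since $\Val(s) = \liminf_R a_R$ by definition, we obtain $\liminf_T \Val(s^T) \geq \Val(s)$, which is enough. The main obstacle the plan confronts is the structural mismatch between the prior variance $v_0^s = 0$ and any periodic fixed-point variance, which is strictly positive: under a naive periodic extension of $s|_{[1,T]}$, this initial gap propagates through every Kalman update of the simulation phase and strictly deflates the value. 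The reset burst exploits the specific form of the Kalman update --- a single batch of $\Theta(1/\epsilon)$ samples caps the post-sample variance at $\sigma/m$ independent of the pre-sample value --- and the single corrected rate $s'_1$ closes the residual gap at the start of the simulation phase, so the overhead is only $O(K) = O(1)$ in both time and budget per period, which vanishes as $T \to \infty$.
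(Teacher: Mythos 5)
Your construction is correct and is essentially the paper's own argument: mimic $s$ over a window of length $T-K$, and fix the initial-condition mismatch between $s$'s prior and the periodic fixed point by prepending a constant-length idle phase that banks budget followed by a constant-size burst (plus your small first-round correction), so the overhead in time, budget, and value is $O(1)$ per period and vanishes as $T \to \infty$. The paper's proof sketch does the same thing, only resetting directly to $s$'s initial variance rather than via an $\epsilon$-cap plus exact-matching correction, so the two proofs differ only in bookkeeping.
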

\begin{proof}
(sketch)  Fix $T$, let $W^T$ be the average value of policy $s$ over rounds $[0,T]$, and let $s^T$ be a periodic policy that mimics policy $s$ over rounds $[0,T]$.  Note that $\lim_{T \to \infty} W^T = Val(s)$.  The difference between $W^T$ and $Val(s^T)$ is driven by the initial condition: by Lemma~\ref{lem.contraction}, $Val(s^T)$ is simply the average period value under the boundary condition $v_0 = v_T$, whereas $s$ may have some alternative initial condition (say $v_0 = v$).  If the initial condition for $s$ lies below that of $s^T$, we will modify policy $s^T$ into a new periodic policy, as follows.  First, we'll note the (constant) number of extra samples needed on round $1$ to reach variance $v$ from an initially unbounded variance.  Our modified policy will first wait the (constant) number of rounds (say $r$) needed to acquire this much budget, without sampling.  Then, starting at round $r$, it will bring the variance to $v$ (using this accumulated budget) and then simulate policy $s$ over rounds $[0,T-r]$.  Relative to $W^T$, this policy obtains all of the value except for that accumulated by policy $s$ over rounds $[T-r, T]$, which is at most a constant.  This policy's value therefore matches $W^T$ in the limit as $T$ grows large.
\end{proof}

The following lemma shows that if there are intervals during which one takes the outside option, then it is better to have them occur at the beginning of the range $[0,R]$.  The intuition is that it is cheaper to reduce the variance to $c$ from a large value once than to reduce from a small value to $c$ many times.

\begin{lemma}
\label{lem:atomic.optimal.form}
Consider a valid policy,
given by $s(\cdot)$ and $a(\cdot)$, and any time $R>0$.  Then there is another 
valid policy $s^*(\cdot)$, $a^*(\cdot)$ and time $r \in [0,R]$ such that $a^*(t) = s^*(t) = 0$ for all $t < r$, $v^*(t) \leq c$ for all $t \in [r,R]$, and the average value of $a^*(\cdot)$ up to time $R$ is at least the average value of the original policy up to time $R$.  This is true in both the discrete and continuous models (with fractional samples).
\end{lemma}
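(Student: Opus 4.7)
The approach is a constructive rearrangement: aggregate all of the original policy's above-$c$ time at the start of $[0, R]$ and replay its below-$c$ behavior contiguously over $[r, R]$. The key ingredient is a superadditivity property of $f(x) = x/(c+x)$: a single atom that drops variance from a very large value to $v$ costs fewer samples than multiple atoms each dropping variance to $v$ from more moderate values.

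First I would reduce to the case where the original policy samples only during below-$c$ segments and at the instants transitioning from above-$c$ to below-$c$; any other samples can be stripped without decreasing value, by an argument analogous to Lemma~\ref{lem:atoms}. Let $r$ be the total measure of above-$c$ time in $[0, R]$, let $T_2 = \{t \in [0,R] : v(t) \leq c\}$ (of measure $R - r$), and let $\tau : [r, R] \to T_2$ be the order-preserving measure-preserving bijection. Construct $(s^*, a^*)$ by: taking no samples on $[0, r)$ (so $v^*(t) = v(0) + t$ grows freely); at time $r$, applying a single atom of size $\tfrac{1}{v_1^{\text{start}}} - \tfrac{1}{v(0) + r}$ to bring the variance down to the starting variance $v_1^{\text{start}}$ of the first below-$c$ segment of the original; and on $(r, R]$, setting $s^*(t) = s(\tau(t))$ and applying an atom of size $\tfrac{1}{v_l^{\text{start}}} - \tfrac{1}{c}$ whenever $\tau(t)$ crosses into the $l$-th below-$c$ segment of the original (for $l \geq 2$, using $v_{l-1}^{\text{end}} = c$). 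By construction $v^*(t) = v(\tau(t)) \leq c$ on $[r, R]$, so the value of $(s^*, a^*)$ over $[0, R]$ equals the value of the original over $T_2$, which is its full value.

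The central technical task is verifying $(s^*, a^*)$ is valid. Flow samples are matched exactly between the two policies, so it suffices to control the atom costs. Writing $r_l$ for the length of the $l$-th above-$c$ interval of the original (so $\sum_l r_l = r$), and letting $A_l \geq c + r_l$ denote the peak variance of that interval, the original's total atom cost minus the new policy's simplifies to
\[
\Delta \;=\; \left(\tfrac{1}{v(0)+r} - \tfrac{1}{A_1}\right) + \sum_{l \geq 2}\left(\tfrac{1}{c} - \tfrac{1}{c + r_l}\right).
\]
Showing $\Delta \geq 0$ reduces after algebra to the inequality $\sum_l \tfrac{r_l}{c + r_l} \geq \tfrac{r}{c + r}$, which follows from the superadditivity $f(a) + f(b) \geq f(a+b)$ of $f(x) = x/(c+x)$ on $[0,\infty)$ (a short direct computation shows the difference equals $ab(2c + a + b)/[(c+a)(c+b)(c+a+b)]$).

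The main obstacle is extending this comparison from the endpoint $T = R$ to all intermediate $T \in [0, R]$, since validity must hold pointwise. For $T \leq r$ this is automatic. For $T \in (r, R]$ I would apply the same superadditivity inequality to the prefix of above-$c$ intervals preceding $\tau(T)$ in the original: the accumulated budget $Br$ banked by the new policy on $[0, r]$ precisely funds the consolidated atom at $t = r$, and subsequent prefix cumulative consumptions are dominated in the same way. The case $v(0) > c$ requires only a small adjustment to the first term (using $A_1 = v(0) + r_1$ rather than $c + r_1$), which leaves the inequality intact. The same construction and estimates apply verbatim in both the discrete (with fractional sampling) and continuous models, since the Kalman sample-cost formula $\tfrac{1}{v} - \tfrac{1}{A}$ takes the identical algebraic form in both settings.
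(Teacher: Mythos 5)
Your global-rearrangement strategy is genuinely different from the paper's proof, which never does a single consolidated accounting: the paper repeatedly takes the \emph{earliest} above-$c$ interval occurring after the variance has dipped below $c$ and shifts just that one interval to the front, comparing the displaced atom $\gamma$ (needed to drop from $c+\delta$ to $c$) with the smaller atom $\gamma'$ (needed to drop from $\tilde v(t_1)+\delta$ to $\tilde v(t_1)$, using $\tilde v(t_1)\ge c$), and then inducts; validity is argued move by move. Your end-of-horizon budget comparison via superadditivity of $x/(c+x)$ is correct as far as it goes. The genuine gap is exactly the step you flag as the main obstacle: pointwise validity at intermediate times. The assertion that ``the accumulated budget $Br$ banked on $[0,r]$ precisely funds the consolidated atom at $t=r$'' is not justified and is not true as a matter of the construction alone: the consolidated atom costs $\tfrac{1}{v_1^{\mathrm{start}}}-\tfrac{1}{v(0)+r}$, which bears no a priori relation to $Br$, and at times just after $r$ your policy has already paid the full consolidation surcharge $\tfrac{1}{v(0)+r_1}-\tfrac{1}{v(0)+r}$ while \emph{none} of the savings on later transition atoms have accrued, so the $T=R$ inequality $\Delta\ge 0$ does not control the prefix spend. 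What actually rescues the construction (when $v(0)\ge c$) is the original policy's own feasibility at its first atom: it must afford an atom of size at least $\tfrac1c-\tfrac{1}{v(0)+r_1}$ out of budget $Br_1$, which forces $B\ge\tfrac{1}{c\,(v(0)+r_1)}\ge\tfrac{1}{(v(0)+r_1)(v(0)+r)}$; one then splits the surcharge $\sum_{l\ge2}\tfrac{r_l}{(v(0)+r_1)(v(0)+r)}$ term by term, covering the part attributable to already-elapsed above-$c$ intervals by the realized savings $\tfrac{r_l}{c(c+r_l)}$ and the part attributable to future intervals by the extra accrued budget $Br_l$ (your policy runs ahead of the original's clock by exactly the future above-$c$ time). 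None of this appears in your sketch, and without it the feasibility of the constructed policy on $(r,R]$ is not established.

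Two smaller issues. First, when $v(0)<c$ your construction must recreate the variance $v_1^{\mathrm{start}}=v(0)$ at time $r$ at cost $\tfrac{1}{v(0)}-\tfrac{1}{v(0)+r}$, a payment the original never made; with $v(0)$ small and $B$ small this is unaffordable, so the initial ``free'' below-$c$ segment must be excluded from the replication (the paper's proof has the same blind spot, as its inequality $\gamma'\le\gamma$ also uses $\tilde v(t_1)\ge c$; the loss is only $O(1)$ and vanishes in the averages for which the lemma is used, but your write-up treats $v(0)>c$ as the exceptional case rather than the benign one). Second, ``verbatim in the discrete model'' is too quick: there the variance overshoots $c$ by up to $\rho$ at crossings, so the clean identities $v=c$ at segment boundaries, on which your cost formulas $\tfrac1c-\tfrac{1}{c+r_l}$ rely, hold only approximately and need the kind of notational adjustment the paper alludes to.
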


\begin{proof}
We will write the following proof in the continuous model, but we note that the same proof applies to the discrete model with just minor adjustments to the notation.  Let $s(\cdot)$ and $a(\cdot)$ be a 
sampling
policy, and suppose that it does not satisfy the conditions of $s^*$ and $a^*$ in the lemma statement.  Write $v(\cdot)$ for the resulting variance, and suppose that $t_1$ is the infimum of all times for which $v(t) < c$.
Note that we can assume that $s(t) = 0$ for any $t$ such that $v(t) > c$, without loss.

Suppose time interval $[t_2, t_3)$ is the earliest maximal interval following $t_1$ such that $v(t) \geq c$ for all $t \in [t_2, t_3)$.  In other words, $[t_2,t_3)$ is an interval during which the decision-maker would choose the outside option, and this interval occurs after some point at which the decision-maker has not chosen the outside option.  
Such an interval must exist, since we assumed that the given policy does not satisfy the conditions of the lemma. 

Our strategy will be to transform this policy $a(\cdot)$ into a different policy that is closer to satisfying the conditions of the lemma.  Roughly speaking, we will do this by ``shifting'' the interval $[t_2,t_3)$ so that it lies before $t_1$: we will push 
the sampling policy over the range $[t_1, t_2)$ forward $(t_3 - t_2)$ units of time.
This will result in a policy with one fewer intervals of time in which the variance lies above $c$.  And, as we will show, this policy has the same total value as the original and is valid.  We can apply this construction to each such interval to construct the policy $s^*$ and $a^*$ required by the lemma.


Let us more formally describe what we mean by shifting the interval $[t_2, t_3)$.  We have that $v(t_2) = c$ and $a(t_3) > 0$ from the definitions of $t_2$ and $t_3$.  Write $\delta = t_3 - t_2$.  Then we have $\tilde{v}(t_3) = c+\delta$.
Let $\gamma > 0$ be such that $\frac{c+\delta}{1+\gamma(c+\delta)} = c$.  That is, $\gamma$ is the size of atom such that, if $a(t_3) = \gamma$, then we would have $v(t_3) = c$.  Since in fact we have $v(t_3) \leq c$ by maximality of the interval, and since $s(t) = 0$ for all $t \in (t_2, t_3)$ by assumption, it must be that $a(t_3) \geq \gamma$.  On the other hand, let $\gamma' > 0$ be such that $\frac{\tilde{v}(t_1)+\delta}{1 + \gamma'(\tilde{v}(t_1)+\delta)} = \tilde{v}(t_1)$.  Since $\tilde{v}(t_1) \geq c$, we must have $\gamma' \leq \gamma$.

We are now ready to describe the shifted policy, given by $s^*$ and $a^*$.  We set $s^*(t) = a^*(t) = 0$ for all $t < t_1 + \delta$, $a^*(t_1+\delta) = a(t_1) + \gamma'$, $a^*(t) = a(t-\delta)$ and $s^*(t) = s(t-\delta)$ for all $t \in (t_1 + \delta, t_3)$, $a^*(t_3) = a(t_3)-\gamma$, and $a^*(t) = a(t)$ and $s^*(t) = s(t)$ for all $t > t_3$.  Roughly speaking, the new policy ``moves'' the interval $[t_2, t_3)$, where the variance lies above $c$, to occur before the sampling behavior that began at time $t_1$.  It also reduces the atom at $t_3$ and increases the atom at $t_1$ (if any); the amounts are chosen so that $v^*(t_3) = v(t_3)$, as we shall see.

We claim that $v^*(t) \geq c$ for all $t < t_1 + \delta$, $v^*(t) = v(t-\delta)$ for $t \in [t_1+\delta, t_3)$, and $v^*(t) = v(t)$ for $t \geq t_3$.  This will imply that the average value of policy $a^*$ is equal to that of policy $a$, since they differ only in that a portion of the variance curve lying below $c$ has been shifted by $\delta$.  That $v^*(t) \geq c$ for all $t < t_1 + \delta$ follows from the definition of $a^*$.  That $v^*(t_1+\delta) = v(t_1)$ follows because $\tilde{v}^*(t_1+\delta) = \tilde{v}(t_1) + \delta$, and $a^*(t_1+\delta)$ consists of an atom $\gamma'$ that shifts the variance from $\tilde{v}(t_1)+\delta$ to $\tilde{v}(t_1)$, plus another atom $a(t_1)$ that shifts the variance from $\tilde{v}(t_1)$ to $v(t_1)$.  Given that $v^*(t_1+\delta) = v(t_1)$, we also have $v^*(t+\delta) = v(t)$ for all $t \in [t_1, t_2)$, as the policy $a^*$ is simply $a$ shifted by $\delta$ within this range.  Finally, we have $\tilde{v}^*(t_3) = c$, and $a^*(t_3) = a(t_3) - \gamma$, which is precisely the size of atom needed to shift the variance from $c$ to $v(t_3)$.  So we have that $v^*(t) = v(t)$ for all $t \geq t_3$, as $a^*$ and $a$ coincide for all such $t$.

We conclude that $s^*$ and $a^*$ have the same average value as $s$ and $a$.  Also, $s^*$ and $a^*$ uses less total budget than $s$ and $a$, and shifts some usage of budget to later points in time, so the new policy is valid.  Finally, the new policy has least one fewer maximal interval in which the variance lies strictly above $c$.  By repeating this construction inductively, we obtain the policy required by the lemma.
\end{proof}

We are now ready to show that no policy that satisfies the budget constraint can achieve value greater than $V^*$.
This establishes our main first claim, that on-off policies are optimal when $z = 0$.

\newtheorem*{thm:opt}{Theorem \ref{thm:opt}}
\begin{thm:opt}
With fractional samples, the value of any valid policy $s$ is at most $V^*$.
\end{thm:opt}
\begin{proof}
We will prove this claim under the discrete model.  Taking the limit over ever-finer discretizations then establishes the result for the continuous model as well.

Choose some $T > 0$, and fix any policy $s$ that is periodic with period $T$.  We will show that the average value of $s$ is at most $\Val(s^T) + o(1)$, where the asymptotic notation is with respect to $T \to \infty$.  Taking the limit as $T \to \infty$ and applying the Lemma~\ref{lem:approx.periodic} above will then complete the result.

Our approach to showing that the average value is at most $\Val(s^T) + o(1)$ will be to convert $s$ into an on-off policy $s'$ of period $T$, without decreasing its average value.  Recall from Lemma~\ref{lem.contraction} (and the discussion following its proof) that the long-run average value of $s'$ is simply the average period value under the periodic boundary condition $v'_0 = v'_T$.  Moreover, when constructing $s'$, we can without loss of generality relax the validity condition to be that at most $BT$ samples are taken at any point over the interval $[0,T]$.  This is because we can strengthen any policy under this weaker condition to satisfy the original budget constraint by delaying the start of the policy for $T$ rounds without taking any samples, and only then starting policy $s'$.  This will have the same long-run average value as our relaxed policy, again by Lemma~\ref{lem.contraction}.

We now describe a sequence of operations to convert $s$ into an on-off policy.  First, by Lemma~\ref{lem:atomic.optimal.form}, we can assume that $s$ spends no samples in the range $[0, T')$ for some $T' < T$, then has $v_t \leq c$ for all $t \in [T', T]$.
We will show that, given any policy of this form, one can convert it into an on-off policy without degrading the total value over the range $[0,T]$ by more than a constant.  

We will apply a potential argument.  Given a policy with variances given by $v_t$, we will write $\bar{v}$ for the average variance in the range $[T', T]$.  That is,
\[ \bar{v} \colon = \frac{1}{T-T'+1} \sum_{T' \leq t \leq T}v_t.\]
Let $v^+ = \max_{T' \leq t \leq T} v_t$ and $v^- = \min_{T' \leq t \leq T} v_t$.  That is, $v^+$ is the maximum variance and $v^-$ is the minimum variance achieved during the interval where the variance lies below $c$.  Note that we must have $v^+ \geq \bar{v}$ and $v^- \leq \bar{v}$.  Let $\Psi$ be the total number of timesteps $t$ between $T'$ and $T$ in which the variance is equal to either $v^-$ or $v^+$, plus $1$ if $v^+ = v^- = \bar{v}$.  Then $\Psi$ is an integer lying between $2$ and $T - T' + 2$.  Also, $\Psi = T - T' + 2$ only if $v^+ = v^-$, which implies that all variances are precisely equal to $\bar{v}$.  We will show how to modify a policy $s$ (in which $v^+ > v^-$) into a new policy $s'$ so that $\Psi$ strictly increases, without changing the average policy value.

Write $A$ for the set of timesteps with variance equal to $v^+$, and $B$ for the set of timesteps with variance equal to $v^-$.  Say $|A| = a$ and $|B| = b$.  Note that $A \cap B = \emptyset$, since we are assuming $v^+ > v^-$.  We will update the sampling policy so that, roughly speaking, the variance of the timesteps in $A$ are all decreased, and the variance of the timesteps in $B$ are all increased, until either a new timestep becomes either a maximal or minimal point, or until all timesteps have variance $\bar{v}$.  More formally, our update is parameterized by some $\epsilon > 0$, and satisfies the following conditions:
\begin{itemize}
\item at all timesteps $t \in A$, the variance is reduced by $\epsilon / a$,
\item at all timesteps $t \in B$, the variance is increased by $\epsilon / b$,
\item at timesteps not in $A \cup B$, the variance is unchanged,
\item $\epsilon$ is maximal so that all elements of $A$ still have the maximum variance, and all elements of $B$ still have the minimum variance.
\end{itemize}
Note that by the final condition, after making this change, either there will be one more maximal timestep or one more minimal timestep, or else the minimum equals the maximum.  In either case, $\Psi$ will strictly increase.  Moreover, this update does not change the average variance of the policy.

It remains to show that we can implement this update without increasing the total spend of the policy.  To see this, consider updating just a single timestep $t \in A$.  The change involves adding samples at time $t$ to decrease the variance by $\epsilon$, then removing samples from time $t+1$ to offset the resulting decrease at that point.  To decrease variance from $v_{t}$ to $v_{t} - \epsilon/a$ requires an extra $\frac{\epsilon/a}{v_{t}(v_{t} - \epsilon/a)}$ samples.  The number of samples that can be saved in the subsequent round is the amount required to move the variance from $v_t + 1$ to $v_t + 1 - \epsilon$, which is $\frac{\epsilon/a}{(v_t + 1)(v_t + 1 - \epsilon/a)}$.  The net increase in samples is therefore
\[ \left( \frac{\epsilon/a}{v_{t}(v_{t} - \epsilon/a)} - \frac{\epsilon/a}{(v_{t}+1)(v_{t} + 1 - \epsilon/a)}\right) \]
Applying this operation to all timesteps in $A$ (of which there are $a$), and recalling that they all have variance $v^+$, we have that the total cost in samples is
\[ \left( \frac{\epsilon}{v^+(v^+ - \epsilon/a)} - \frac{\epsilon}{(v^+ + 1)(v^+ + 1 - \epsilon/a)}\right)
= \epsilon \cdot \frac{2v^+ + 1 - \epsilon/a}{v^+(v^+ - \epsilon/a)(v^+ + 1)(v^+ + 1 - \epsilon/a)}. \]
A similar calculation yields that the total number of samples saved by increasing the variance by $\epsilon / b$ for all timesteps in $B$ is
\[ \left( \frac{\epsilon}{v^-(v^- + \epsilon/b)} - \frac{\epsilon}{(v^+ + 1)(v^+ + 1 + \epsilon/b)}\right)
= \epsilon \cdot \frac{2v^- + 1 + \epsilon/b}{v^-(v^- + \epsilon/b)(v^- + 1)(v^- + 1 + \epsilon/b)}. \]
Recalling that $v^+ - \epsilon/a \geq v^- + \epsilon/b$ and $v^+ > v^-$, we have that
\begin{align*}
\epsilon \cdot \frac{2v^- + 1 + \epsilon/b}{v^-(v^- + \epsilon/b)(v^- + 1)(v^- + 1 + \epsilon/b)}
&= \frac{(v^- + 1) + (v^- + \epsilon/b)}{v^-(v^- + \epsilon/b)(v^- + 1)(v^- + 1 + \epsilon/b)} \\
&= \frac{(v^+ + 1)\tfrac{v^+ - \epsilon/a}{v^- + \epsilon/b} + (v^+ - \epsilon/a)\tfrac{v^+ + 1}{v^- + 1}}{v^-(v^+ - \epsilon/a)(v^+ + 1)(v^- + 1 + \epsilon/b)} \\
&> \frac{(v^+ + 1) + (v^+ - \epsilon/a)}{v^-(v^+ - \epsilon/a)(v^+ + 1)(v^- + 1 + \epsilon/b)} \\
&> \frac{(v^+ + 1) + (v^+ - \epsilon/a)}{v^+(v^+ - \epsilon/a)(v^+ + 1)(v^+ + 1 - \epsilon/b)}
\end{align*}
and hence the total number of samples saved is greater than the total number of samples spent in making this change.  Thus, the new policy also satisfies the average budget constraint.

Repeating this procedure, we conclude that we must eventually reach a state in which the resulting policy has constant variance equal to $\bar{v}$ in the range $[T', T]$.  
This policy has $s_t = 1/\sqrt{\bar{v}}$ for all $t \in (T', T]$, and possibly a larger number of samples at $s_T'$.  The on-off policy that sets $s'_t = 1/\sqrt{\bar{v}}$ for all $t \in [T', T]$ is therefore also valid.  Moreover, from our previous analysis, this on-off policy has average value within $o(1)$ of policy $s$.  We conclude that the value of $s$ is at most $\Val(s^T) + o(1)$, as required.
\end{proof}

\section{Appendix: Proofs from Section~\ref{sec:lazy_approx} in the Continuous Model}
\label{sec:appendix.approx}

It is technically more convenient to present these results for the continuous model first, as that allows us to avoid rounding issues.  Therefore we first prove these results for the continuous model and then in Appendix~\ref{sec:discrete.proofs} provide proofs for the discrete version for those that do not have a unified proof here.

We first prove a structural result about transforming policies.

\newtheorem*{lem:atoms}{Lemma \ref{lem:atoms}}
\begin{lem:atoms}
Fix any valid sampling policy (not necessarily lazy) with resulting variance function $v(\cdot)$, and any sequence of timestamps $t_1 < t_2 < \dotsc < t_\ell < \dotsc$.  Then there is a valid policy that spends samples only in atoms, with $\{t ~|~ a(t) > 0 \} \subseteq \{t_1, \dotsc, t_\ell, \dotsc\}$, resulting in a variance function $\breve{v}(\cdot)$ with $\breve{v}(t_i) \leq v(t_i)$ for all $i$.
\end{lem:atoms}
\begin{proof}
We will prove this result for the continuous model.  The proof for the discrete model follows similarly, and is given in Section~\ref{sec:discrete.proofs}.

For the continuous model, we'll first prove the result for the case of a single timestep $t_1$.  The result then follows by repeated application to each subsequent $t_i$ inductively.  Let $s(t)$ and $a_0(t)$ be the continuous sampling rate and atoms of the original policy, respectively, with resulting variance $v(t)$.  Assume first that  $s(t) = 0$ for all $t$ and that the atoms in the interval $[0, t_1]$ occur at times $0 < \tau_1 < \tau_2 < \dotsc < \tau_k = t_1$, with corresponding number of samples $a_0(\tau_1), \dotsc, a_0(\tau_k)$.  Note that the assumption $\tau_k = t_1$ is without loss, as we could set $a_0(\tau_k) = 0$.  If $k = 1$ then we are done, so assume $k \geq 2$.  We will show that the alternative policy which lumps together the first two atoms, given by $a_1$, where $a_1(\tau_1) = 0$ and $a_1(\tau_2) = a_0(\tau_1) + a_0(\tau_2)$, and $a_1(\tau_i) = a_0(\tau_i)$ for all $i > 2$, results in a variance function $v_1$ such that $v_1(\tau_i) \leq v(\tau_i)$ for all $i \geq 2$.  This will complete the claim, by repeated application to the first non-zero atom in the sequence.

Recall that $\tilde{v}(\tau_1) = v(0) + \tau_1$ is the the variance just prior to applying the atom at $\tau_1$.  Then we have that $v(\tau_1) = \frac{\tilde{v}(\tau_1)}{1+a_0(\tau_1)\tilde{v}(\tau_1)}$.  Similarly, $\tilde{v}(\tau_2) = v(\tau_1) + (\tau_2 - \tau_1)$, so
\[ \tilde{v}(\tau_2) = \frac{\tilde{v}(\tau_1)}{1+a_0(\tau_1)\tilde{v}(\tau_1)} + \tau_2 - \tau_1. \]
We then have that $v(\tau_2) = \tilde{v}(\tau_2) / (1 + a_0(\tau_2) \tilde{v}(\tau_2))$.

Alternatively, with $a_1$ we have $\tilde{v}_1(\tau_2) = v(0) + \tau_2$. If we let $\tilde{\tilde{v}}_1(\tau_2)$ denote the variance after having applied an atom of size $a_0(\tau_1)$ at time $\tau_2$ but before applying an atom of size $a_0(\tau_2)$, we have
\[ \tilde{\tilde{v}}_1(\tau_2) = \frac{\tilde{v}(\tau_1) + \tau_2 - \tau_1}{1 + a_0(\tau_1)(\tilde{v}(\tau_1) + \tau_2 - \tau_1)}. \]
We will then have that $v_1(\tau_2) = \tilde{\tilde{v}}_1(\tau_2) / (1 + a_0(\tau_2) \tilde{\tilde{v}}_1(\tau_2))$.  We now note that $\tilde{\tilde{v}}_1(\tau_2) \leq \tilde{v}(\tau_2)$, since
\begin{align*}
\tilde{v}(\tau_2) 
&= \frac{\tilde{v}(\tau_1)}{1+a_0(\tau_1)\tilde{v}(\tau_1)} + \tau_2 - \tau_1 \\
&\geq \frac{\tilde{v}(\tau_1) + \tau_2 - \tau_1}{1+a_0(\tau_1)\tilde{v}(\tau_1)} \\
& \geq \frac{\tilde{v}(\tau_1) + \tau_2 - \tau_1}{1+a_0(\tau_1)(\tilde{v}(\tau_1) +\tau_2 - \tau_1)} \\
&= \tilde{\tilde{v}}_1(\tau_2).
\end{align*}
We can therefore conclude that $v(\tau_2) \geq v_1(\tau_2)$.  This further implies that $v(\tau_i) \geq v_1(\tau_i)$ for all $i > 2$ as well, since $a_1(\tau_i) = a(\tau_i)$ for all $i > 2$ and, inductively, the variance is weakly lower under $v_1$ than under $v$ just prior to each atom, and hence is lower after the application of each atom as well.

We now turn to the more general case where $s(t)$ is not identically $0$.  We again consider the case of a single timestep $t_1$, and the more general result will follow inductively.  
We can view $s$ as the limit, as $\epsilon \to 0$, of a sequence of discretized policies that only use atoms, and only at times that are multiples of $\epsilon$.  We will take our sequence to be $\epsilon = t_1/k$ for $k = 1, 2, 3, \dotsc$, so that time $t_1$ is present in each of these discretizations.  For each such $\epsilon = t_1/k$, take $s_k$ to be the discretized version of $s$, so that $s_k \to s$ as $k \to \infty$.  Applying our lazy result above to policy $s_k$, we have that for each $s_k$, there is a policy that matches the variance of $s_k$ at time $t_1$, and that only applies an atom at $t_1$.  Taking the limit as $k$ grows, we conclude that there is a policy that only takes an atom at time $t_1$, whose variance at $t_1$ is no greater than $v(t_1)$, the variance generated by policy $s$.
\end{proof}

We can now prove our main approximation result for the continuous setting without costs.  We actually prove two versions, as a stronger bound is possible for the case of $z=0$.

\newtheorem*{thm:2approx.costs1}{Theorem \ref{thm:2approx.costs} (Version 1)}
\begin{thm:2approx.costs1}
\label{thm:2approx.costs.continuous}
The optimal lazy policy is $1/2$-approximate, in the continuous setting and with $z = 0$.
\end{thm:2approx.costs1}
\begin{proof}
Write $s^*(t), a^*(t)$ for an optimal policy, and $v^*(t)$ for the corresponding variance function.  That is, $s^*$ is a policy that minimizes 
\[ \lim\sup_{T \to \infty} \frac{1}{T} \int_{t = 0}^{T} \min\{ c, v^*(t) \}dt \] 
subject to budget constraints.  

We note that, without loss of generality, we can assume that $s^*(t) = 0$ whenever $v^*(t) > c$.  That is, the policy has no continuous sampling when the variance is above the outside option.  This follows from our structural lemma above, since any policy can be replaced by one that takes no sampling in an interval where the variance lies above the outside option, and instead uses an atom at the end of such an interval to bring the variance back down to the outside option level $c$.

We now define an lazy policy that approximates $s^*,a^*$ up to a fixed $\epsilon > 0$.  We do so by defining a sequence of intervals iteratively.  We begin by setting $t_0 = 0$.  For each $t_i$, we will define $t_{i+1} > t_i$ as follows.  If $v^*(t) > c - \epsilon$ for all $t \in [t_i, t_i + \epsilon]$, take 
\[ t_{i+1} = \inf\{t > t_i\ \colon\ v^*(t) \leq c - \epsilon\}.\]  
Otherwise, choose $t_{i+1} = t_i + \delta$, where 
\[ \delta = \inf\{ \delta > \epsilon\ \colon\ c - v^*(t) \leq \delta\ \forall\ t \in [t_i, t_i+\delta) \}. \]
Note that in this latter case we must always have $\delta \geq \epsilon$, and hence $t_{i+1} \geq t_i + \epsilon$.  We also must have $\delta \leq c$, since certainly $v^*(t) \geq 0$ everywhere.

For each $i \geq 0$, let $m_i = \arg\inf_{m \in (t_i, t_{i+1}]}\{ v^*(m) \}$.  That is, $m$ is the time in the subinterval $(t_i, t_{i+1}]$ where $v^*$ takes its lowest value.  

Consider the policy that applies atoms at times $t_0, t_1, \dotsc$ and at times $m_0, m_1, \dotsc$, so as to match the variance of $v^*$ at each of those times.  By Lemma~\ref{lem:atoms}, this policy is valid.  

Next consider the policy that applies atoms only at times $t_0, t_1, \dotsc$, and applies those atoms so that $v(t_i) = v^*(m_i)$ for each $i$.  This policy is not necessarily valid, but we claim that this policy can only ever go budget negative by at most $c\cdot B$, the amount of budget accrued in $c$ time units.  This is because within each interval $[t_i, t_{i+1}]$, this policy uses no more budget than the previous policy.  It may cause budget to be spent earlier than before, within the same window. 
However, it only differs from the previous policy on subintervals where $v^*(m_i) < c - \epsilon$, and hence can only shift the spending of budget earlier by at most $c$ time units as, in these cases, $\delta$ (the difference between $t_i$ and $t_{i+1}$) is at most $c$.  Thus, this new policy can go budget-negative, but never by more than $cB$.  

We next claim this policy is lazy.  Indeed, for each sub-interval $[t_i, t_{i+1}]$, we have that $v(t_i) \geq c - \delta$ where $\delta = t_{i+1} - t_i$.  Thus, our policy has the property that $\lim_{t \to t_{i+1}}v(t) \geq c$, so each atom occurs at a point where the variance is at or above $c$.  Note that this makes use of the fact that variance drifts upward at a rate of $1$ per unit time, in the continuous model.

We claim this (budget-infeasible) policy is $1/2$-approximate, up to an additive $\epsilon$ term on the average cost.  Since the policy is lazy, its value (relative to the outside option) is the area of a sequence of isoceles right-angled triangles. By construction, the squares that form the completion of those triangles cover the entirety of the value of $s^*,a^*$, except possibly for regions where $v^*(t) \geq c - \epsilon$.  See Figure~\ref{fig:squares} for a visualization.
So our policy is $1/2$-approximate, possibly excluding regions where $v^*$ has an average contribution of $\epsilon$.

Finally, we note we can transform this policy to a budget-feasible one without loss in the approximation factor. First, if we shift our policy to start at time $c$, rather than time $0$, then it is precisely valid.  As this decreases its value by at most a bounded amount, the loss in average value over a time horizon $T$ vanishes as $T \to \infty$.  So we have a valid $1/2$-approximation subject to an arbitrarily small additional additive loss.  Taking $\epsilon \to 0$, and noting that there is a universally optimal lazy policy, gives us that the optimal lazy policy is exactly a $1/2$-approximation.
\end{proof}

\newtheorem*{thm:2approx.costs2}{Theorem \ref{thm:2approx.costs} (Version 2)}
\begin{thm:2approx.costs2}
The optimal lazy policy is $1/3$-approximate, in the continuous setting and with $z > 0$.
\end{thm:2approx.costs2}
\begin{proof}
We consider the same construction as in Theorem~\ref{thm:2approx.costs.continuous}.  The only step that could increase costs is the step that shifts atoms to the ``left endpoint'' of its corresponding square.  This might change the distance between atoms, possibly increasing costs by requiring us to pay the fixed startup cost more times.

To fix this, we'll change the definition of a square so that a new square cannot begin until the original policy takes a sample.  This might ``extend'' some squares to the right, forming rectangles.  The value generated from the extended part of the rectangle can be at most half the area of the square, since by definition the original policy isn't sampling during this time so its variance rises at rate 1.  This change therefore increases the approximation factor to at most 3, since now one ``triangle'' is covering a square plus one extra triangle.

Having made this change, we know by definition that the original policy sampled at the left endpoint of each rectangle.  We can therefore sample (only) at the left endpoint of each rectangle, without increasing costs relative to the original policy.  Note that this \emph{can} increase costs relative to the policy that takes atoms at the minimum-variance point within each rectangle; so our cost comparison will only be with respect to the original policy.
\end{proof}

We show how to extend these results to the discrete setting in Appendix~\ref{sec:discrete.proofs}.  We note that unlike the continuous setting, we does not suffer a loss in approximation factor for the discrete setting when adding fixed costs.

We next show how to compute the best lazy policy, a result which was sketched in Section~\ref{sec:lazy_approx}, but not formally stated.


\begin{lemma}
\label{lem:solve-optimal-atomic-continuous.costs}
One can compute an asymptotically optimal valid lazy policy in closed form.  
\end{lemma}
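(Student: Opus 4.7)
The plan is to reduce the asymptotic problem to a finite one-dimensional optimization over the size of a ``steady-state'' atom, building on the structural results already in place. First, apply Lemma~\ref{lem:atomic.optimal.form} to any candidate lazy policy over a horizon $[0,R]$ to obtain an equivalent-value lazy policy that has a ``saving'' prefix $[0,r]$ with no samples, followed by a ``spending'' interval $[r,R]$ in which $v(t)\le c$ throughout. By laziness, atoms in the spending interval occur precisely at the instants where $\tilde v(t)=c$, and between consecutive atoms the variance drifts linearly upward from $c/(1+s_i c)$ back to $c$, where $s_i$ is the size of the preceding atom.

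Next, show that within the spending interval it is without loss to take atoms of a uniform size. Each cycle (atom plus ensuing gap) contributes value $\tfrac{1}{2}\delta(s)^2$, where $\delta(s)=sc^2/(1+sc)$, at a cost of $s+\min\{f,f\delta(s)\}$; the $\min$ encodes whether the gap is short enough to be billed as flow or whether one pays a single startup charge. The total value and total cost over the spending interval decompose as sums over cycles, so by a fractional-knapsack exchange argument any non-uniform sequence of cycles can be replaced by the cycle type with the highest value density $\tfrac{1}{2}\delta(s)^2/(s+\min\{f,f\delta(s)\})$ without decreasing value or violating the budget. Let $s^\star$ denote the maximizer; as a rational function of $s$ that is smooth on each of the two regimes $\delta(s)\lessgtr 1$, its critical points are roots of a low-degree polynomial and can be written in closed form (with a separate closed-form candidate in each regime, compared directly).

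Now solve the finite-horizon problem. There are two cases. If the policy that uses atoms of size $s^\star$ starting as late as possible does not exhaust the budget over $[0,R]$, then budget slack implies the horizon binds: set $r=0$ and instead choose $s$ so that $s+\min\{f,f\delta(s)\}=B\,\delta(s)$, i.e.\ each cycle's cost equals the budget accumulated over its length; again this reduces to a polynomial in $s$ in each regime. Otherwise $s^\star$ is used throughout the spending interval, and $r$ is pinned down by the budget equation $r B = $ (number of cycles in $[r,R]$)$\cdot(s^\star+\min\{f,f\delta(s^\star)\})$. Either way the optimal finite-horizon lazy policy is written in closed form.

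Finally, lift to an asymptotically optimal policy by chaining the finite-horizon optimal constructions for horizons $R, 2R, 4R,\dotsc$, as in the proof of Lemma~\ref{lem:solve-optimal-atomic-continuous}: the $O(1)$ extra samples required at the start of each block to bring the variance down from its prior value contribute a vanishing additive loss to the long-run average. The main obstacle is the uniform-atom-size step, because the $\min\{f,f\delta(s)\}$ term creates a non-smooth kink at $\delta(s)=1$ that prevents a naive convexity argument. One has to verify the exchange argument across the kink (so that mixing a ``flow'' cycle with a ``fixed-cost'' cycle is never strictly preferred to using only copies of the better of the two), and then conclude that the overall optimum is attained at one of the closed-form candidates produced in each regime.
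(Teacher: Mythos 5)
Your proposal follows essentially the same route as the paper's proof: reduce to a per-cycle value density $\tfrac{1}{2}\bigl(c-\tfrac{c}{1+sc}\bigr)^2 / \bigl(s+\min\{f, f(c-\tfrac{c}{1+sc})\}\bigr)$, optimize $s$ separately in the two regimes around the kink, and split into the budget-binding case (positive saving prefix, repeat the density-maximizing atom) versus the time-binding case ($r=0$, choose $s$ so each cycle's cost equals $B$ times the cycle length), then pass to the limit in $R$; your extra attention to the exchange argument across the kink only makes explicit what the paper leaves implicit. One small bookkeeping slip: in the budget-binding case the constraint pinning down $r$ should equate total cycle costs with the full accrued budget $RB$ (equivalently, $rB$ with the per-cycle shortfall times the number of cycles), not $rB$ with the total cycle cost, but this does not affect the argument.
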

\begin{proof}
We will consider some large $R$ and solve for a policy that maximizes average value up to time $R$, then take a limit as $R \to \infty$.  By Lemma~\ref{lem:atomic.optimal.form}, we can assume the optimal policy sets $a(t) = 0$ for all $t < r$, and has $v(t) \leq c$ for all $t \geq r$, where $r \leq R$.

For a given atom of size $s$, say taken at time $t \geq r$, recall that the subsequent atom will be taken at time $t + (c - \tfrac{c}{1+sc})$, the next time at which the variance is equal to $c$.  We will therefore define the cost of this atom as $s + \min\{f, f(c - \tfrac{c}{1+sc})\}$.  This is the cost of the $s$ samples, plus the cost of either maintaining flow until the next atom is taken (if $c - \tfrac{c}{1+sc} < 1$) or of paying the start-up cost when the next atom is taken (otherwise).  The sum of costs over all atoms is equal to the total cost of the policy.  We therefore define the value density of the atom to be
\[ \frac{1}{2} \cdot \frac{1}{s + \min\{f, f(c - \tfrac{c}{1+sc})} \cdot \left(c - \frac{c}{1+sc}\right)^2. \]
This expression can be maximized with respect to $s$ by considering separately the cases $(c - \tfrac{c}{1+sc}) > 1$ and $(c - \tfrac{c}{1+sc}) \leq 1$.  The resulting solution will be the optimal choice of atom size, assuming that it does exhaust the total budget  I.e., when the total cost of the resulting policy is at least $BR$.

If the optimal value of $s$ corresponds to a policy that does not exhaust the total budget, then this means that it is time, rather than budget, that is the binding constraint.  In this case the optimal policy takes $r = 0$ and chooses $s$ so that the budget is exhausted at time $R$.  That is, $s$ is chosen so that $s + \min\{f, f(c - \tfrac{c}{1+sc})\} = B(c - \tfrac{c}{1+sc})$, meaning that the total cost of an atom equals the budget acquired over the time interval between that atom and the next.  Again, one can solve for $s$ by considering cases $(c - \tfrac{c}{1+sc}) > 1$ and $(c - \tfrac{c}{1+sc}) \leq 1$.  This $s$ will correspond to the optimal sampling policy, which will take atoms at regular intervals up to time $R$.
\end{proof}

\section{Appendix: Proofs from Section~\ref{sec:lazy_approx} in the discrete model}
\label{sec:discrete.proofs}


In this section we complete proofs of statements that hold in both the discrete and continuous models, that we have previously proved only in the continuous model.  We begin with the Lemma~\ref{lem:atoms}.

\begin{lem:atoms}
Fix any valid (not necessarily lazy) sampling policy with resulting variances $v_t$, and any sequence of timesteps $t_1 < t_2 < \dotsc < t_k < \dotsc$.  Then there is a valid policy that spends samples only at timesteps that lie in the set $\{t_1, \dotsc, t_k, \dotsc\}$, resulting in variances $\breve{v}_t$ with $\breve{v}_{t_i} \leq v_{t_i}$ for all $i$.
\end{lem:atoms}

\begin{proof}
We'll prove this for the interval $[0, t_1]$, and the result then follows by repeated application to each subsequent $t_i$ inductively.  Assume that the original policy and spends samples in the interval $(0, t_1]$ occur at times $0 < a_1 < a_2 < \dotsc < a_k = t_1$, with corresponding number of samples $s_1, \dotsc, s_k$, where $s_i \geq 0$ for each $i$.  Note that the assumption $a_k = t_1$ is without loss, as we could set $s_k = 0$.  If $k = 1$ then we are done, so assume $k \geq 2$.  We will show that the alternative policy with samples $s'_1, \dotsc, s'_k$, where $s'_1 = 0$ and $s'_2 = s_1 + s_2$, and $s'_i = s_i$ for all $i > 2$, results in a variance function $v'$ such that $v'(a_i) \leq v(a_i)$ for all $i \geq 2$.  This will complete the claim, by repeated application to the first non-zero atom in the sequence.

Write $v_1 = v(0) + a_1$ for the variance just prior to taking the samples at $a_1$.  Then we have that $v(a_1) = \frac{v_1}{1+s_1 v_1}$.  Writing $v_2 = v(a_1) + (a_2 - a_1)$ for the variance just prior to taking the samples at $a_2$, we have
\[ v_2 = \frac{v_1}{1+s_1v_1} + a_2 - a_1. \]
We then have that $v(a_2) = v_2 / (1 + s_2 v_2)$.

Alternatively, if we write $v'_2$ for the variance of the policy with $s'_1 = 0$, after having taken $s_1$ samples at time $a_2$ but before taking $s_2$ more samples,
\[ v'_2 = \frac{v_1 + a_2 - a_1}{1 + s_1(v_1 + a_2 - a_1)}. \]
We will then have that $v'(a_2) = v'_2 / (1 + s_2 v'_2)$.  We now note that $v'_2 \leq v_2$, since
\[ v_2 = \frac{v_1}{1+s_1v_1} + a_2 - a_1 \geq \frac{v_1 + a_2 - a_1}{1+s_1v_1}
\geq \frac{v_1 + a_2 - a_1}{1+s_1(v_1 + a_2 - a_1)} = v'_2. \]
We can therefore conclude that $v(a_2) \geq v'(a_2)$.  This further implies that $v(a_i) \geq v'(a_i)$ for all $i > 2$ as well, since $s'_i = s_i$ for all $i > 2$ and, inductively, the variance is weakly lower under $v'$ than under $v$ just prior to each set of samples, and hence is lower after the application of each set of samples.
\end{proof}

We next complete the proof of our main approximation result in the discrete setting. 

\newtheorem*{thm:2approx.costs}{Theorem \ref{thm:2approx.costs}}
\begin{thm:2approx.costs}
In the discrete model, the optimal lazy policy is $1/2$-approximate.
\end{thm:2approx.costs}
\begin{proof}
Write $s^*_t$ for the optimal sampling policy, and $v^*_t$ for the corresponding variances.  We note that, without loss of generality, we can assume that if $s^*_t >0$ then $v^*_t < c$.  That is, the policy does not take samples if the resulting variance is still above the outside option.  This follows from our structural lemma above, since any policy can be replaced by one that takes no samples until the variance would be below $c$, then takes all the forgone samples then.

We now define a lazy policy that approximates $s^*$.  We do so by defining a sequence of intervals iteratively.  We begin by setting $t_0 = 0$.  For each $t_i$, we will define $t_{i+1} > t_i$ as follows.  
Choose $t_{i+1} = t_i + \delta$, where 
\[ \delta = \inf\{ \delta \in \mathbb{N}^+ \colon\ c - v^*_t \leq \delta\ \forall\ t \in [t_i, t_i+\delta) \}. \]
We must have $\delta \leq c$, since certainly $v^*(t) \geq 0$ everywhere.

For each $i \geq 0$, let $m_i = \arg\inf_{m \in (t_i, t_{i+1}]}\{ v^*_m \}$.  That is, $m$ is the time in the subinterval $(t_i, t_{i+1}]$ where $v^*$ takes its lowest value.  

Consider the policy that takes samples at times $t_0, t_1, \dotsc$ and at times $m_0, m_1, \dotsc$, so as to match the variance of $v^*$ at each of those times.  By Lemma~\ref{lem:atoms}, this policy uses no more budget than $s^*$ at any given point of time, and it only takes samples on rounds when the original policy took samples, so is therefore valid even when $z > 0$.

Next consider the policy that applies atoms only at times $t_0, t_1, \dotsc$, and applies those atoms so that $v_{t_i} = v^*_{m_i}$ for each $i$.  This policy is not necessarily valid, but we claim that this policy can only ever go budget negative by at most $c\cdot B$, the amount of budget accrued in $c$ time units.  Proof of claim: within each interval $(t_i, t_{i+1}]$, this policy uses no more budget than the previous one.  It may cause budget to be spent earlier than before, within the same window.  However, it can only shift the spending of budget earlier by at most $c$ time units.  Thus, this new policy can go budget-negative, but never by more than $cB$.  

We next claim this policy is lazy.  Indeed, for each sub-interval $[t_i, t_{i+1}]$, we have that $v_{t_i} \geq c - \delta$ where $\delta = t_{i+1} - t_i$.  Thus, our policy has the property that $\lim_{t \to t_{i+1}^*}v(t) \geq c$, so each atom occurs at a point where the variance is at or above $c$.

Finally, we claim that our policy is $1/2$-approximate.  Since the policy is lazy, its value (relative to the outside option) is lower bounded by the area of a sequence of isoceles right-angled triangles. By construction, the squares that form the completion of those triangles cover the entirety of the value of $s^*$.

To complete the proof, we need to restore validity by shifting it to start at time $c$, rather than time $0$.  As this decreases its value by at most a bounded amount, the loss in average value over a time horizon $T$ vanishes as $T \to \infty$.
\end{proof}


\begin{lemma}
\label{lem:solve-optimal-atomic-discrete.costs}
One can compute a valid policy whose asymptotic value is at least the value of the optimal lazy policy.
\end{lemma}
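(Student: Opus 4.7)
The plan is to mirror the closed-form argument in Lemma~\ref{lem:solve-optimal-atomic-continuous.costs}, with the key adaptation that the ``length'' between consecutive atoms is now a positive integer, allowing us to enumerate over finitely many cases instead of optimizing over a continuous parameter. I would begin by fixing some large $R$ and solving for the best lazy policy over $[0,R]$; the asymptotic policy will then be obtained by concatenating optimal policies over a sequence $R, 2R, 4R, \ldots$ of horizons, exactly as in Lemma~\ref{lem:solve-optimal-atomic-continuous.costs}, with the overhead from the first atom of each segment vanishing in the average.

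By Lemma~\ref{lem:atomic.optimal.form}, there is an optimal lazy policy on $[0, R]$ that sets $s_t = 0$ for $t < r$ and has $v_t \leq c$ for all $t \in [r, R]$. After the initial saving period, the variance alternates between a post-sample value $v^*$ and an increasing sequence $v^* + \rho, v^* + 2\rho, \ldots$ until $\tilde{v}_t \geq c$, at which point another atom is placed. For any lazy policy of this form I would define the \emph{length} of an atom to be the number of rounds $k$ between it and the next atom; laziness plus the steady-state condition force $k$ to be the smallest positive integer with $v^* + k\rho \geq c$, so $k \leq \lceil c/\rho \rceil$. This means only finitely many lengths need be considered.

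Next, for each candidate length $k \in \{1, \ldots, \lceil c/\rho\rceil\}$ I would derive closed-form expressions for a repeating pattern of length $k$. Using Lemma~\ref{lem:kalman}, the atom size $s$ and the post-sampling variance $v^*$ are linked by the steady-state equation $v^* = (v^*+k\rho)/(1 + (s/\sigma)(v^*+k\rho))$, giving $v^*$ as an explicit function of $s$. The total value contributed over the $k$ rounds is $\sum_{i=0}^{k-1}(c - v^* - i\rho)$ (which is a polynomial in $s$ via $v^*$), and the total cost is $s + z$. Following the two-case analysis in Lemma~\ref{lem:solve-optimal-atomic-continuous.costs}, I would then either (i) maximize \emph{value density} $\bigl[\textstyle\sum_i(c - v^* - i\rho)\bigr]/(s+z)$ over $s$, which reduces to a polynomial root-finding problem solvable in closed form, or (ii) if the unconstrained optimum does not exhaust the budget over $[0,R]$, solve instead $s + z = Bk$ so that per-round cost exactly matches the budget rate. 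Taking the better of the two, for each $k$, and then the best $k$, produces the optimal repeating pattern.

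The main obstacle is justifying that the optimal lazy policy really does converge (up to vanishing error) to a single repeating atom rather than some mixture of lengths or sizes. I would handle this by a contraction-and-smoothing argument analogous to Lemma~\ref{lem.contraction} and the potential argument in the proof of Theorem~\ref{thm:opt}: any mixture of two different atom types can be replaced by a single convex combination without decreasing value density, by the concavity/monotonicity of the variance-update map. The remaining subtlety is the first atom of each concatenated segment, which may need to be larger than the steady-state atom size to bring the variance down from $v_0 + r$ to $v^*$; since any single sample brings the variance below $\sigma$ and only $O(1)$ further samples suffice to reach $v^*$, this introduces only a constant additive overhead per segment, which vanishes under doubling as $R \to \infty$.
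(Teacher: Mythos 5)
Your proposal follows essentially the same route as the paper's proof: fix a large horizon $R$, invoke Lemma~\ref{lem:atomic.optimal.form} to reduce to a save-then-spend structure, restrict to a single repeating atom whose size is pinned down by the steady-state equation from Lemma~\ref{lem:kalman}, optimize the value density $\bigl[\sum_i (c - v^* - i\rho)\bigr]/(s+z)$ with the same budget-binding versus time-binding case split, and let $R \to \infty$ with vanishing first-atom overhead; your enumeration over the integer gap length $k$ is just a reparameterization of the paper's optimization over the post-sample variance $v_r$ with $\delta = \lceil c - v_r\rceil$. The only piece the paper includes that you omit is the final rounding to an integer number of samples per atom (handled there via quasiconcavity of the objective in $v_r$ for fixed gap, at the cost that the rounded policy is only asymptotically lazy), which is a minor addendum rather than a gap in your argument as the lemma is stated.
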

\begin{proof}
We will consider some large $R$ and solve for a policy that maximizes average value up to time $R$, then take a limit as $R \to \infty$.  By Lemma~\ref{lem:atomic.optimal.form}, we can assume the optimal policy sets $s_t = 0$ for all $t < r$, and has $v(t) \leq c$ for all $t \geq r$, where $r \leq R$.  Suppose for now that we are allowed to take fractional samples.

Consider a lazy policy which takes samples until the variance is $v_r$ at time $r$, then every time it is due to take samples takes the number $s$ such that the variance becomes $r$ again.  By construction, this means $s$ samples are taken at each time other than $r$ at which they are taken,  which happens every $\delta = \lceil c - v_r \rceil$ periods.  Therefore,  $s$ satisfies the equation $v_r = \tfrac{v_r + \delta}{1 + s(v_r + \delta)}$, or $s = \tfrac{\delta}{v_r(v_r+\delta)}$.
We therefore define the value density of this policy as 
\begin{equation}
\label{eq:cost_form}
\frac{1}{\frac{\delta}{v_r(v_r+\delta)} + f} \cdot \left(\sum_{i = 0}^{\delta-1} c - v_r + i\right).
\end{equation}
We can then optimize over $v_r$ in the same manner as in Lemma~\ref{lem:solve-optimal-atomic-continuous.costs} to find the optimal way for a lazy policy to exhaust its budget.  For any given $R$ there may be a slight suboptimality due to the need to spend some samples to reach $v_r$ the first time and some budget that does not get spent if there is not budget for an integer number of spending periods, but these go to zero for $R$ large.

This gives an asymptotically valid lazy policy using fractional samples.  Note that the objective in Equation~\eqref{eq:cost_form} is quasiconcave in $v_r$ for each fixed $\delta$ (the derivative is a quadtratic in $v_r$ with negative coefficient on the $v^2$ term and positive on the other two terms).  Therefore the optimal integer choice for a given $\delta$ can be found by ``rounding'' $v_r$ up or down the smallest amount that gives an integer $s$ and choosing one of these.  While the resulting policy of waiting and taking $s$ samples every $\delta$ periods may not be lazy, it is asymptotically so for large $R$ by Lemma~\ref{lem.contraction}

\end{proof}

%

\section{Appendix: Optimal Policy for the Non-Gaussian Extension}
\label{sec:extension.proof}

In this section we solve for the form of the optimal policy in the binary extension discussed in Section~\ref{sec:extension}.

First we recall the model.  There is a binary hidden state of the world, $x_t \in \{0,1\}$, which flips each round independently with some small probability $\epsilon > 0$.  The decision-maker's action in each round is to guess the hidden state and the objective is to maximize the fraction of time that this guess is made correctly.  Write $y_t \in \{0,1\}$ for the guess made in round $t$.  Each sample is a binary signal correlated with the hidden state, equal to $x_t$ with probability $\tfrac{1}{2} + \delta$ where $\delta > 0$.  The decision-maker can adaptively request samples in each round, subject to the budget constraint (of $B > 0$ samples per round on average), before making a guess.  Note that sampling is adaptive: the decision-maker can observe the outcome of one sample in a round before choosing whether to take the next.  While this adaptivity was unimportant in the Gaussian setting (since the sample outcomes were not payoff-relevant, only the induced variance), it is significant for non-Gaussian evolution.

After sampling in each round, the decision-maker has a posterior distribution $G_t$ over the state of the world.  We'll write $G_{tk}$ for the posterior after $k \geq 0$ samples have been taken in round $t$.  Note that $G_{tk}$ is fully described by the probability that $x_t = 1$, which we will denote by $p_{tk}$.  We claim that there is an optimal policy that sets a threshold $\theta \in (0,1/2)$, and after having taken $k \geq 0$ samples in round $t$, it takes another sample if and only if $p_{tk} \in [\theta, 1-\theta]$.  We call such policies \emph{threshold policies}.

\begin{theorem}
For any $\epsilon$ and $\delta$, there is an optimal threshold policy.
\end{theorem}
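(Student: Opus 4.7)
My plan is to formulate the problem as an average-reward Markov decision process on the belief state $p_t = \Pr{x_t = 1 \mid \mathcal{H}_t}$, where $\mathcal{H}_t$ denotes the observation history, and then deduce the threshold structure via optimal-stopping and symmetry arguments. Each sample maps $p$ to one of two posteriors by Bayes' rule, while between rounds $p$ is pushed by the affine map $T(p) = \epsilon + (1-2\epsilon)p$; stopping and guessing in a round with belief $p$ yields expected reward $\max(p, 1-p)$.

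To handle the average budget constraint, I would Lagrangianize: introduce $\lambda \geq 0$ and consider the unconstrained problem of maximizing the long-run average of $\mathbbm{1}_{y_t = x_t} - \lambda k_t$, where $k_t$ is the number of samples drawn in round $t$. Standard duality for constrained average-reward MDPs produces a $\lambda^\star$ under which an optimal policy is also optimal for the original budgeted problem, so it suffices to exhibit an optimal threshold policy for each fixed $\lambda$. The average-reward Bellman equation becomes
\[
V(p) = \max\Bigl\{\max(p,1-p) - g + V(T(p)),\ -\lambda + Z(p)V(q_+(p)) + (1-Z(p))V(q_-(p))\Bigr\},
\]
where $Z(p)$ is the probability the next sample equals $1$, $q_\pm(p)$ are the two resulting posteriors, and $g$ is the optimal long-run average reward. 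The label-swap symmetry $0 \leftrightarrow 1$ gives $V(p) = V(1-p)$, so the set of beliefs at which sampling is strictly better than stopping is symmetric about $1/2$.

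The heart of the argument is to show that this sampling set is an interval $[\theta, 1-\theta]$. I would do this by establishing convexity of $V$ on $[0,1]$, so that the ``sampling minus stopping'' gap is a symmetric concave function of $p$ with a maximum at $1/2$, whose positivity set is therefore a symmetric interval. Convexity of $V$ is proved by value iteration on the $\beta$-discounted variant: each Bellman update preserves convexity because (i) $\max(p,1-p) + V(T(p))$ is a convex function plus a convex-after-affine composition, and (ii) noting $Z(p)q_+(p) = (\tfrac12+\delta)p$ and $(1-Z(p))q_-(p) = (\tfrac12-\delta)p$, the sampling branch $E[V(\text{posterior})\mid p]$ equals
\[
Z(p)\, V\!\left(\tfrac{(\tfrac12+\delta)p}{Z(p)}\right) + (1-Z(p))\, V\!\left(\tfrac{(\tfrac12-\delta)p}{1-Z(p)}\right),
\]
a sum of two perspectives of $V$ evaluated at arguments affine in $p$; the perspective of a convex function is jointly convex, and composition with an affine map preserves convexity. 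Pointwise maxima preserve convexity, closing the induction. Passing $\beta \uparrow 1$ via the vanishing-discount technique yields a convex $V$ that solves the average-reward Bellman equation.

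The main obstacle I anticipate is the vanishing-discount limit on the continuous belief space $[0,1]$: one needs uniform boundedness and equicontinuity of the normalized discounted values $V_\beta - V_\beta(1/2)$ as $\beta \uparrow 1$, which should follow from the $[0,1]$-boundedness of per-round payoffs together with Lipschitz continuity of $T$ and of the Bayesian update away from the boundary. Once a symmetric threshold $\theta(\lambda)$ is obtained for each $\lambda$, an intermediate-value argument (the long-run average sample count is continuous and monotone in $\lambda$, ranging from $\infty$ at $\lambda = 0$ to $0$ as $\lambda \to \infty$) identifies the $\lambda^\star$ whose average sample usage equals $B$, giving an optimal threshold policy for the original problem.
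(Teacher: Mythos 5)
Your framework (belief-state MDP, Lagrangian relaxation of the budget, convexity of the value function via the perspective/Bayes operator, vanishing discount) is a reasonable alternative to the paper's route, which instead works with a finite horizon, reduces to budget-in-expectation policies, and argues by backward induction that the payoff of guessing is increasing and concave in $p_{tk}$ on $[1/2,1]$. But your proposal has a genuine gap at its central step: the claim that convexity of $V$ makes the ``sampling minus stopping'' gap concave does not follow. If $V$ is convex, then \emph{both} branches of your Bellman equation are convex in $p$: the stopping branch $\max(p,1-p) - g + V(T(p))$ is a convex function plus a convex function composed with an affine map, and the sampling branch is, by your own perspective argument, also convex. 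Their difference is a difference of convex functions, which need not be concave or even quasi-concave, so you cannot conclude that its positivity set is an interval. The classical threshold argument for Bayesian sequential testing avoids this because there the stopping payoff is \emph{linear} on each half-interval and stopping ends the problem, so the gap restricted to $[1/2,1]$ is (continuation value) minus (linear), which inherits the curvature of the continuation value; here stopping does not end the game, and the stopping branch carries the convex continuation term $V(T(p))$, so that trick does not transfer. You need a separate argument for the interval structure of the sampling region---e.g., the paper's route of showing the guess-payoff is increasing and weakly concave in $p$ on $[1/2,1]$ (concavity coming from the fact that per-round payoff is linear in $p$ while the mean-reversion $T(p)$ dampens the effect of larger $p$ on future rounds), or a direct proof that the marginal value of one more sample is decreasing in $|p - 1/2|$.

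Two secondary points, less central but worth flagging: (i) strong duality for constrained average-reward MDPs on a Borel state space is not automatic, and the standard resolution yields an optimal policy that randomizes between two Lagrangian-optimal policies when budget usage is discontinuous in $\lambda$; your intermediate-value step asserts continuity and monotonicity of the long-run sample count in $\lambda$ without justification, whereas the paper sidesteps this by arguing directly that expectation-budget policies can be converted to ex-post valid ones with vanishing loss. (ii) The vanishing-discount limit you anticipate is indeed delicate near the belief boundary, but since $\epsilon > 0$ keeps the reachable beliefs in a compact subinterval bounded away from $\{0,1\}$, this part is repairable; the concavity non sequitur above is the step that actually breaks the proof.
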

\begin{proof}[Proof (sketch).]
Fix some large $T$.  We will evaluate performance over the first $T$ rounds, and show optimality up to a loss that is vanishing as $T$ grows large.

We first note that it suffices to consider policies that are admissible subject to a budget constraint that binds in expectation.  To see this, take a policy that satisfies the budget constraint in expectation; we will construct a policy with the same asymptotic value that satisfies the budget constraint ex post.  To do so, we'll make two changes to the budget-in-expectation policy.  First, delay the policy's execution by $\Theta(\sqrt{T})$ rounds.  This has negligible impact on asymptotic performance, but begins the policy with a pool of funds to pull from.  Standard concentration bounds will then imply that it the policy will exceed its ex post budget exponentially rarely; whenever it does, simply pause execution by another $\Theta(\sqrt{T})$ rounds to recover the pool of funds and then continue.  The resulting policy satisfies the budget ex post, and the impact of such pauses on asymptotic performance will vanish in the limit as $T$ grows large.

Next, we claim that it suffices to consider policies whose action after having taken $k \geq 0$ samples in round $t$ depends only on $p_{tk}$.  That is, the actions are otherwise history independent.  This is because the optimal policy starting from a state with posterior $p_{tk}$ depends only on $p_{tk}$, and in particular does not depend on the number of samples that have been taken on round $t$ or any previous round (as the constraints bind only in expectation).  Thus, for any optimal policy that sometimes takes an additional sample when the posterior is $p_{tk}$ and sometimes does not, it would likewise be optimal to simply ignore the history and choose a decision independently at random, according to a distribution consistent with how frequently each choice is made when the posterior is $p_{tk}$ over the long-run execution of the policy.


We next claim that the long-run payoff of the optimal policy, starting in a round $t$ where the posterior begins at $p_{t0}$, is weakly increasing in $|p_{t0} - 1/2|$.  This follows from the fact that $p_{t0}$ being farther from $1/2$ corresponds to additional certainty about the hidden state.  For any $1/2 < p'_{t0} < p_{t0}$, any policy with posterior mean $p_{t0}$ could choose to ``forget'' information and behave as though its posterior is $p'_{t0}$, and achieve at least as high a payoff (in expectation) as a policy whose true posterior in round $t$ is $p'_{t0}$.

We next claim that the total long-run payoff of guessing after having taken $k$ samples is increasing and weakly concave in $p_{tk}$, for $p_{tk} \geq 1/2$.  (The case $p_{tk} < 1/2$ will follow similarly by symmetry.)  The fact that payoffs are increasing follows by backward induction: this is certainly true in the last round $T$, as the payoff in round $t$ is strictly increasing.  Then, for any $t < T$, the payoff in the subsequent threshold policy likewise depends only on the value of the posterior at the beginning of the round (as argued above), which is increasing in $p_{tk}$ and will be at least $1/2$ if $p_{tk} \geq 1/2$.  Concavity follows from the fact that in-round payoffs increase linearly in $p_{tk}$, but inter-round reversion to the mean is more pronounced for larger $p_{tk}$, so an increase in $p_{tk}$ has a sublinear effect on the payoffs from subsequent rounds.

Since the payoff function is increasing in $p_{tk}$ in each round, the optimal policy will be monotone: in each round, there will be a threshold above which a guess is made, and below which samples are taken.  This choice of thresholds will be made to optimize long-run payoff given the average budget constraint.  Since the payoffs are concave in $p_{tk}$, and this value function is identical across rounds, the optimal choice of thresholds will likewise be uniform across rounds.  By symmetry, if $\theta$ is chosen as the threshold for $p_{tk} > 1/2$, the threshold for $p_{tk} < 1/2$ will be $1-\theta$.
\end{proof}

\end{document}